\def\ps@headings{%
\def\@oddhead{\mbox{}\scriptsize\rightmark \hfil \thepage}%
\def\@evenhead{\scriptsize\thepage \hfil \leftmark\mbox{}}%
\def\@oddfoot{}%
\def\@evenfoot{}}
\definecolor{shadecolor}{gray}{0.85}
\newif\iflong 
\newif\ifcomm
\newtheorem{thm}{Theorem}
\newtheorem{lem}{Lemma}
\newtheorem{prop}{Proposition}
\newtheorem{defn}{Definition}
\newcommand{\argmax}{\arg\max}
\newcommand{\Prob}[1]{{\mathbb P}\left(#1\right)}    
\newcommand{\E}{{\mathbb E}}                         
\newenvironment{proof-sketch}{{\noindent\em Proof Sketch.}\hspace*{0.3em}}{\qed\medskip}
\newenvironment{proof-of}[1]{{\noindent\em Proof of #1.}\hspace*{0.3em}}{\qed\medskip}
\newcounter{assumption}
\newcommand{\theassumptionletter}{A}
\renewcommand{\theassumption}{\theassumptionletter\arabic{assumption}}
\newcommand{\MG}{\mathcal{G}}
\newcommand{\MD}{\mathcal{D}}
\newcommand{\MP}{\mathcal{P}}
\newcommand{\beq}{\begin{equation}}
\newcommand{\eeq}{\end{equation}}
\newcommand{\beqa}{\begin{eqnarray}}
\newcommand{\eeqa}{\end{eqnarray}}
\newcommand{\beqan}{\begin{eqnarray*}}
\newcommand{\eeqan}{\end{eqnarray*}}
\newcommand{\ben}{\begin{eqnarray*}}
\newcommand{\een}{\end{eqnarray*}}
   \newcommand\comm[1]{\textcolor{blue}{ #1}}
   \newcommand\comm[1]{}
   \renewcommand{\todo}[1]{}
\newcommand{\ttt}{\mbox{$\mathbf{t}$}}
\newcommand{\xx}{\mbox{$\mathbf{x}$}}
\renewcommand\paragraph{\@startsection{paragraph}{4}{\z@}%
    {1.5ex plus .2ex minus .3ex}%
            {-0em}%
                        {\normalsize\bf}}
\newcommand{\captionfonts}{\small}
\long\def\@makecaption#1#2{%
  \vskip\abovecaptionskip
  \sbox\@tempboxa{{\captionfonts #1: #2}}%
  \ifdim \wd\@tempboxa >\hsize
    {\captionfonts #1: #2\par}
  \else
    \hbox to\hsize{\hfil\box\@tempboxa\hfil}%
  \fi
  \vskip\belowcaptionskip}
\begin{document}
\title{On Budgeted Influence Maximization\\ in Social Networks
}
\author{
\begin{tabular}{ccc}
Huy Nguyen & \hspace{1in} & Rong Zheng  \\
Department of Computer Science & \hspace{1in} & Department of Computing and Software \\
University of Houston & \hspace{1in} & McMaster University \\
Houston, TX 77204 & \hspace{1in} & Hamilton, ON, L8S 4K1, Canada\\
{\it hanguyen5@uh.edu} & \hspace{1in} & {\it rzheng@mcmaster.ca}
\vspace{-0.2in}
\end{tabular}
}
\maketitle
\thispagestyle{empty}
\pagestyle{empty}
\begin{abstract}
Given a budget and arbitrary cost for selecting each node, the budgeted influence maximization (BIM) problem concerns selecting a set of seed nodes to disseminate some information that maximizes the total number of nodes influenced (termed as influence spread) in social networks at a total cost no more than the budget.
Our proposed seed selection algorithm for the BIM problem guarantees an approximation ratio of $(1-1/\sqrt{e})$. The seed selection algorithm needs to calculate the influence spread of candidate seed sets, which is known to be \#P-complex. Identifying the linkage between the computation of marginal probabilities in Bayesian networks and the influence spread, we devise efficient heuristic algorithms for the latter problem.
Experiments using both large-scale social networks and synthetically generated networks demonstrate superior performance of the proposed algorithm with moderate computation costs. Moreover, synthetic datasets allow us to vary the network parameters and gain important insights on the impact of graph structures on the performance of different algorithms.

\end{abstract}
\begin{keywords}
Budgeted influence maximization, social network, information diffusion, belief propagation.
\end{keywords}
\section{Introduction}
\label{sec:intro}
The social network of interactions among a group of individuals plays a fundamental role in the spread of information, ideas, and influence. Such effects have been observed in real life, when an idea or an action gains sudden widespread popularity through {\it ``word-of-mouth''} or {\it ``viral marketing''} effects. For example, free e-mail services such as Microsoft's Hotmail, later Google's Gmail, and most recently Google's Google+ achieved wide usage largely through referrals, rather than direct advertising.

In viral marketing, one important question is given limited advertisement resources, which set of customers should be targeted such that the resulting influenced population is maximized. Consider a social network modeled as a graph with vertices representing individuals and edges representing connections or relationship between two individuals.
The influence maximization (IM) problem tries to find a seed set $S$ with cardinality $|S| = k$ in the graph such that the expected number of nodes influenced by $S$ is maximized~\cite{Domingos01,Richardson02,kempe03}. With the cardinality constraint, the submodularity nature of the influence spread renders a greedy algorithm with $(1-1/e)$ approximate ratio that in each round picks the seed with maximum influence spread and runs for $k$ rounds.
However the assumption of equal costs for all seed nodes seldom holds in practice.
With the proliferation of influence score services such as Klout and PeerIndex\footnote{{\it http://www.klout.com} and {\it http://www.peerindex.com}}, one can easily measure his influence in the social sphere and use that to negotiate the price for services he provides. The higher the influence score of a user, the more costly it is to persuade him.

We consider in this paper a generalized version of the IM problem, namely, the budgeted influence maximization (BIM) problem: given a fixed budget $b$ and a random cost function $c$, find a seed set $S$ which fits the budget $\sum_{s_i \in S}c(s_i) \leq b$ and maximizes the number of influenced nodes. Clearly, BIM is more relevant in practice as there is typically a price associated with initializing the dissemination of information. With the budget constraint, we prove that direct application of the simple greedy algorithm may result in unbounded performance gap.

In this paper, we present a seed selection algorithm that can attain an approximation guarantee of $(1-1/\sqrt{e})$ ($\sim 0.394$).
One critical component of the seed selection process is the determination of influence spread of a set of seeds. Exact computation of influence spread is proven to be of \#P-complete~\cite{kempe03}. Thus, efficient algorithms need to be devised. More specifically, we first establish the linkage between influence spread computation and belief propagation on a Bayesian network (modeled as a directed acyclic graph [DAG]), where the marginal conditional dependency corresponds to the influence probabilities. Belief propagation has been extensively studied in literatures, and thus many exact or approximation algorithms can be leveraged to estimate the influence spread. For a general graph that contains loops, we propose two approximation algorithms that prune some edges in the graph to obtain a DAG that captures the bulk of influence spread. To reduce the number of candidate seed nodes, we localize the influence spread region such that at each round, only nodes that are affected by the previously selected seed need to be evaluated.
Empirical study shows that the proposed algorithms can scale up to large-scale graphs with millions of edges with high accuracy. On real-world social network graphs, our methods achieve influence spread comparable to that by Greedy algorithm~\cite{kempe03} and incur significant less computation costs. In the unit-cost IM problem, the proposed methods outperform PMIA~\cite{pmia} in achievable influence spread at the expense of marginal increase in computation time. In the BIM problem, the proposed methods outperform CELF~\cite{leskovec07} in term of scalability and performance on dense graphs. We further study the effect of network structures on the performance of the algorithms.

The main contributions of this paper are summarized as follows:
\begin{itemize}
\item We propose a greedy algorithm for BIM with a constant approximation ratio.
\item We cast the problem of inference spread computation on a DAG as an instance of belief propagation on a Bayesian network.
\item We prove the \#P-hardness of inference spread computation on a DAG.
\item Two heuristics are proposed to construct DAGs from a general graph that capture the bulk of influence spread.
\item We provide important insights on the impact of graph structures on performance of different algorithms.
\end{itemize}

The rest of this paper is organized as follows. In Section~\ref{sec:previouswork}, we give a comprehensive review of the related literatures. Section~\ref{sec:bim} presents the seed selection algorithm and proves its performance bound.
Theoretical results concerning influence spread on DAGs are in Section~\ref{sec:spreadondag}. In Section~\ref{sec:dagmodels}, we devise two heuristics to reduce a general directed graph into a DAG which captures the majority of influence spread. From the presented theoretical results, we have the main algorithm in Section~\ref{sec:proposedalgorithm}. In Section~\ref{sec:eval}, extensive experiment results are presented. Finally we conclude the paper and discuss future research directions in Section~\ref{sec:conclusion}.

\section{Related Work}
\label{sec:previouswork}
Kempe {\it et al.} in~\cite{kempe03} are the first to formulate the IM problem. The authors proved the submodularity of the influence spread function and suggested a greedy scheme (henceforth referred to as Greedy algorithm) with an incremental oracle that identifies, in each iteration, a new seed that maximizes the incremental spread. The approach was proven to be a $(1-1/e)$-approximation of the IM problem. However, Greedy suffers from two sources of computational deficiency: 1) the need to evaluate many candidate nodes before selecting a new seed in each round, and 2) the calculation of the influence spread of any seed set relies on Monte-Carlo simulations.

In an effort to improve Greedy, Leskovec {\it et al.}~\cite{leskovec07} recognized that not all remaining nodes need to be evaluated in each round and
proposed the ``Cost-Effective Lazy Forward'' (CELF) scheme. Experimental results demonstrate that CELF optimization could achieve as much as 700-time speed-up in selecting seeds. However, even with CELF mechanism, the number of candidate seeds is still large. 
Recently, Goyal {\it et al.} proposed CELF++~\cite{GoyalCELF} that has been shown to run from 35\% to 55\% faster than CELF. However, the improvement comes at the cost of higher space complexity to maintain a larger data structure to store the look-ahead marginal gains of each node.

Chen {\it et al.} devises several heuristic algorithms for influence spread computation~\cite{degreediscountic,pmia,ldag}. In Degree Discount~\cite{degreediscountic}, the expected number of additional vertices influenced by adding a node $v$ in the seed set is estimated based on $v$'s one hop neighborhoods. It also assumes that the influence probability is identical on all edges. In~\cite{pmia} and~\cite{ldag}, two approximation algorithms, PMIA and LDAG are proposed to compute the maximum influence set under IC and LT models, respectively.
In LDAG, it has been proven that under the LT model, computing influence spread in a DAG has linear time complexity, and a heuristic on local DAG construction is provided to further reduce the compute time. We have proven in Section~\ref{sec:spreadondag} that computing influence spread in a DAG under the IC model remains \#P-hard. The marked difference between the two results arises from the fact that in the LT model, the activation of incoming edges is coupled so that in each instance, only one neighbor can influence the node of interest in an equivalent random graph model.

Another line of work explores diffusion models beyond LT and IC. Even-Dar {\it et al.}~\cite{Even-Dar} argue that the most natural model to represent diffusion of opinions in a social network is the probabilistic voter model where in each round, each person changes his opinion by choosing one of his neighbors at random and adopting the neighbor's opinion. Interestingly, they show that the straightforward greedy solution, which picks the nodes in the network with the highest degree, is optimal. Sylvester~\cite{Sylvester09maximizingdiffusion} studies the spread maximization problem on dynamic networks and examines the use of dynamic measures with Greedy algorithm on both LT and IC models. Chen {\it et al.}~\cite{chen09} consider a new model that incorporates negativity bias and design an algorithm to compute influence on tree structures.

Inapproximability results of problems related to IM have also been investigated in literature. MINSEED is the problem of finding the minimized seed set size to activate all or a portion of vertices. Chen~\cite{ningchen} proves that under LT model with a general threshold, MINSEED can not be approximated within a ratio of $O(2^{\log^{1-\varepsilon}n})$, for any fixed $\varepsilon > 0$, unless $NP \subseteq DTIME(n^{polylog(n)})$. In the case when the threshold equals two, the author proves that it is as hard as the case with a general threshold, even for constant degree graphs. Ackerman {\it et al.}~\cite{ackerman} cast MINSEED and IM as maximization problems making them amenable to optimization techniques. However, since the number of variables and constraints grow in $O(n^2)$ and $O(n^3)$ respectively -- $n$ being the number of vertices in the graph -- this approach is only tractable in small-size problems. MINTIME is the problem of finding a target size $k$ such that all or a portion of vertices are activated in the minimum possible time (in terms of spread time or hop count). With a given coverage threshold $\eta$, Goyal {\it et al.}~\cite{goyal} prove that under both IC and LT model, the greedy algorithm can produce the result covering $\eta - \varepsilon$ vertices ($\varepsilon > 0$) in min time, with seed size $|S| \leq k(1+\ln(\eta/\varepsilon))$.  Ni {\it et al.}~\cite{yaodongni} investigate the MINTIME problem by proposing a new spread model and proving various timing bounds on the proposed model.

Literatures on epidemiology are also related to the IM problem that identifies nodes that can initiate viral propagation to most part of the network. Under the proposed model, the authors of~\cite{tisec07} proved that the epidemic threshold for a network is exactly the inverse of the largest eigenvalue of its adjacency matrix. In a follow-up work~\cite{icdm10}, the authors used the previously defined epidemic threshold to quantify the vulnerability of a given network and devised a fast algorithm to choose the best $k$ nodes to be immunized (removed) so as to minimize network vulnerability.~\cite{pkdd10} considered the immunization problem on dynamic networks. The key differences between work on viral immunization literatures and IM lie in the spreading model adopted (e.g.: SIS [susceptible-infected-susceptible] or SIR [susceptible-infected-recovered] vs. IC or LT) and whether the dynamics in the evolution of influence are of interest.

Most existing work on the IM problem only considers cardinality constraints. CELF~\cite{leskovec07} is the only applicable approach to the BIM problem. We will later show in our evaluation that the proposed methods outperform CELF in term of running time (several orders of magnitude faster) and performance on dense networks.

This article is an extended version of our conference paper in~\cite{NguyenIMBP}. We modified our main algorithm to solve the BIM problem and prove its approximation factor. We added detailed algorithm description, complexity analysis, and report more comprehensive results regarding algorithm performance on different real datasets. We also conducted new experiment sets on synthetic networks and provide results on the impact of graph structure on different IM algorithms which, to the best of our knowledge, has never been studied before.

\section{The Budgeted Influence Maximization Problem}
\label{sec:bim}
In this section, we consider the BIM problem with the objective to select the seed set that maximizes influence spread given a fixed budget and arbitrary node costs.

%
\subsection{Problem Formulation}
Consider the network a directed graph $\MG = (V, E)$ with $|V| = n$ vertices and $|E| = m$ edges.
For every edge $(u, v) \in E$, $p(u,v)$ denotes the probability of influence being propagated on the edge.
In this paper, we adopt the Independent Cascade (IC) model. Given a seed set $S \subseteq V$, the IC model works
as follows. Let $S_t\subseteq V$ be the set of node (newly) activated at time
$t$, with $S_0 = S$ and $S_t \cap S_{t-1} = \emptyset$. At round $t+1$, every node $u  \in S_t$ tries to activate its
neighbors in $v \in V\backslash\bigcup_{0\le  i\le t}{S_i}$ independently with
probability $p(u,v)$. The influence spread of $S$, denoted by $\sigma(S)$, is
the {\it expected} number of activated nodes given seed set $S$.

Kempe {\it et al.}~\cite{kempe03} proved two important properties of the $\sigma(\cdot)$ function: 1) $\sigma(\cdot)$ is {\it submodular}, namely, $\sigma(S\cup\{v\}) - \sigma(S) \geq \sigma(T\cup\{v\}) - \sigma(T)$ for all $v\in V$ and all subsets $S$ and $T$ with $S \subseteq T \subseteq V$; 2) $\sigma(S)$ is {\it monotone}, i.e. $\sigma(S) \leq \sigma(T)$ for all set $S \leq T$. For any given spread function $\sigma(\cdot)$ that is both submodular and monotone, the problem of finding a set $S$ of size $k$ that maximizes $\sigma(S)$ can be approximated by a simple greedy approach.

\paragraph*{Budgeted Influence Maximization} In BIM, each node $u$ is associated with an arbitrary cost $c(u)$. The goal is to select a seed set $S \subseteq V$ such that the total cost of this set is less than a budget $b$. Denote by $c(S)$ the total cost of a set, i.e., $c(S)=\sum_{u \in S}{c(u)}$. Budgeted IM (BIM) can be formulated as an optimization problem:
\begin{equation}
\begin{array}{ll}
\underset{S \subseteq V}{\max} & \sigma(S)\\
\mbox{s.t.} &  c(S) \leq b
\end{array}
\end{equation}

When $c(u)\equiv 1, \forall u \in S$, BIM degenerates to the original IM problem. Thus, we call IM the unit-cost BIM. Since IM is NP-hard, it is easy to see that BIM is NP-hard as well.
Key notations used in this paper are summarized in Table~\ref{tab:mainnotations}.

\begin{table}[t]
\caption{Notations}
\centering 
\vspace{0.1in}
{\small
\begin{tabular}{c | c}
\hline
$\MG, V, E$ & the directed graph, its set of vertices and edges\\
\hline
$n, m$ & the number of nodes, edges in $\MG$\\
\hline
$k, b$ & the budget in term of node count and cost\\
\hline
$p(u,v)$ & the propagation probability from $u$ to $v$\\
\hline
$p(v)$ & the activation probability of the node $v$\\
\hline
$c(v)$ & the cost of the node $v$\\
\hline
$Par(v)$ & the set of parents of the node $v$\\
\hline
$S$ & the selected seed set\\
\hline
$\theta$ & the influence threshold\\
\hline
$\sigma(S)$ & the influence spread of the set $S$\\
\hline
& the incremental influence spread of\\[-0.8ex]
\raisebox{1.2ex}{$\delta(v)$}
& selecting $v$ as a seed node\\
\hline
& the directed acyclic graph from $\MG$ on which \\[-0.8ex]
\raisebox{1.2ex}{$\MD(S)$}
& influence is spread given the seed set $S$\\
\hline
\end{tabular}
}
\label{tab:mainnotations}
\end{table}

\subsection{The Seed Selection Algorithm}

First, we consider an intuitive greedy strategy that selects at each step a node $u$ that maximizes the spread gained over cost ratio if the cost of $u$ is less than the remaining budget. We hereby refer to this scheme as the {\it Naive Greedy} approach. Let $r$ be the number of iterations executed and $S_r$ be the seed set at step $r$. Note that $|S_r| \leq r$. At step $r + 1$, Naive Greedy calculates the incremental spread-cost ratio.
\beq
\delta(v) = (\sigma(S \cup v) - \sigma(S))/c(v), \forall v \in V\backslash S.
\eeq
The algorithm chooses $u$ if $u = \argmax_{v \in V, c(s_r \cup v) \le b}{\delta(v)}$. The algorithm terminates when no budget remains, or no node can be added to $S$. Naive Greedy is summarized in Algorithm~\ref{algo:naivegreedy}.

\begin{algorithm}[t]
\caption{Naive Greedy}
\label{algo:naivegreedy}
\SetKwInOut{Input}{input}
\SetKwInOut{Output}{output}
\Input{$G = (V,E), b$}
\BlankLine
\nl $S = \emptyset$\\
\nl \Repeat{ $V =  \emptyset$}
    {
\nl     $\delta(v) = (\sigma(S \cup v) - \sigma(S))/c(v), \forall v\in V$\\
\nl     $u = \argmax_{v \in V}{\delta(v)}$\\
\nl     \If{$c(S \cup u) \leq b$}{
\nl         $S = S \cup u$\\
        }
\nl     $V = V\backslash u$\\
    }
\BlankLine
\Output{$S$}
\end{algorithm}

We first observe that Naive Greedy can have unbounded approximation ratio. Consider a network containing $l+1$ nodes $V = \{u, v_1, v_2, \cdots, v_l\}$. Every pair in $v_1, v_2, \cdots, v_l$ is connected by an edge with influence probability one, while $u$ is an isolated node. Let the cost $c(u) = 1 - \varepsilon$, $c(v_i) = l, \forall i = 1, \cdots, l$ and the budget $b = l$. The optimal solution will pick any node $v_i$ and achieve an influence spread of $l$. In contrast, Naive Greedy picks $u$ since it has the maximum influence-cost ratio $1/1-\epsilon > 1$. The resulting influence spread is 1. Thus, the approximation ratio for Naive Greedy is $l$.

Next, we show that Naive Greedy can be modified to achieve a constant approximation ratio. This algorithm is an adaptation of an algorithm first proposed by Khuller {\it et al.}~\cite{Khuller199939}. We assume that there is no node with a cost greater than the budget $b$, as it will never be a feasible solution to BIM. Let $S_1$ be the seed set selected by Naive Greedy, we consider another candidate solution $s_{max}$, which is the node that has the largest influence. We compare the spread of $S_1$ and $s_{max}$, then output the one with higher influence spread. The process is illustrated in Algorithm~\ref{algo:improvedgreedy}.

\begin{algorithm}[t]
\caption{Improved Greedy}
\label{algo:improvedgreedy}
\SetKwInOut{Input}{input}
\SetKwInOut{Output}{output}
\Input{$G = (V,E), b$}
\BlankLine
\nl $S_1 = $ result of Naive Greedy\\
\nl $s_{max} = \argmax_{v\in V}{\sigma(v)}$\\
\nl $S = \argmax{(\sigma(S_1),\sigma(s_{max}))}$
\BlankLine
\Output{$S$}
\end{algorithm}

\vspace{0.1in}
\begin{thm}
\label{thm2}
Algorithm~\ref{algo:improvedgreedy} provides a $(1-1/\sqrt{e})$-approximation for the BIM problem.
\end{thm}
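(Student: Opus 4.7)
The plan is to follow the Khuller-Moss-Naor template for budgeted maximization of a monotone submodular set function, specialized to the influence spread $\sigma(\cdot)$. The three ingredients are (i) a submodularity-driven recurrence that controls how fast Naive Greedy closes the gap to $\sigma(S^*)$, (ii) an exponential bound that activates once the accumulated cost crosses the budget, and (iii) a case analysis on the cost of the first rejected candidate so that either the greedy branch or the best-singleton branch of Algorithm~\ref{algo:improvedgreedy} alone meets the target ratio.

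Let $S^*\subseteq V$ be any feasible set attaining the optimum $OPT=\sigma(S^*)$, and let $v_1,v_2,\ldots$ be the nodes that Naive Greedy actually adds, in order, with $S_i=\{v_1,\ldots,v_i\}$; denote by $v_{l+1}$ the first candidate examined but rejected, so that $c(S_l)+c(v_{l+1})>b$. The first technical step is the recurrence
\[
OPT - \sigma(S_i) \ \le \ \Bigl(1-\tfrac{c(v_i)}{b}\Bigr)\bigl(OPT-\sigma(S_{i-1})\bigr).
\]
The argument averages over $S^*\setminus S_{i-1}$: monotonicity gives $\sigma(S_{i-1}\cup S^*)\ge OPT$, submodularity bounds the total increment by a sum of per-element marginals, and $c(S^*)\le b$ produces some $w^*\in S^*\setminus S_{i-1}$ whose marginal-per-cost ratio is at least $(OPT-\sigma(S_{i-1}))/b$; Naive Greedy picks $v_i$ with a ratio no smaller. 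A subtlety worth checking is that $w^*$ is still among the unprocessed candidates at step $i$ (it was not pruned by an earlier budget rejection), which holds because every element of $S^*$ has cost at most $b$. Unrolling to index $l+1$ and applying $\prod_j\bigl(1-c(v_j)/b\bigr)\le\exp\bigl(-\sum_j c(v_j)/b\bigr)<1/e$ gives $\sigma(S_l\cup\{v_{l+1}\})\ge(1-1/e)\,OPT$, and subadditivity (from submodularity plus $\sigma(\emptyset)=0$) splits this as $\sigma(S_l)+\sigma(\{v_{l+1}\})\ge(1-1/e)\,OPT$. Since $S_l\subseteq S_1$ (Naive Greedy never discards nodes it has added) and $\sigma(\{v_{l+1}\})\le\sigma(s_{max})$, Algorithm~\ref{algo:improvedgreedy} already achieves the loose bound $\tfrac{1}{2}(1-1/e)$.

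To lift the constant to $1-1/\sqrt{e}$, I would split on the cost of the first rejected element. If $c(v_{l+1})\le b/2$ then $c(S_l)>b/2$, so $\sum_{j\le l}c(v_j)/b>1/2$ and the same product bound applied at step $l$ alone gives $OPT-\sigma(S_l)\le OPT\cdot e^{-1/2}$, i.e., $\sigma(S_1)\ge\sigma(S_l)\ge(1-1/\sqrt{e})\,OPT$. In the complementary heavy case $c(v_{l+1})>b/2$ the one-step recurrence yields $\sigma(\{v_{l+1}\})\ge(c(v_{l+1})/b)(OPT-\sigma(S_l))>\tfrac{1}{2}(OPT-\sigma(S_l))$, which must be combined with $\sigma(S_l)+\sigma(\{v_{l+1}\})\ge(1-1/e)\,OPT$ to force $\sigma(s_{max})\ge(1-1/\sqrt{e})\,OPT$. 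The main obstacle is exactly this heavy case: the joint constraints on $\sigma(S_l)$ and $\sigma(\{v_{l+1}\})$ have to be optimized over their feasible region in a way that lands precisely on the target constant, and this is where the delicate piece of the Khuller-Moss-Naor style analysis lives.
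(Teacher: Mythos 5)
Your skeleton is the right one and coincides with the paper's: both rest on the Khuller--Moss--Naor recurrence (the paper's Lemma~\ref{lem1}, cited to~\cite{Khuller199939} and~\cite{KrauseTechReport05}), and your light case ($c(v_{l+1})\le b/2$, hence $c(S_l)>b/2$, hence the product bound gives $(1-1/\sqrt{e})\,OPT$) is essentially the paper's Case~2.2. The problem is your heavy case, and it is a genuine gap rather than a deferred computation. From $y:=\sigma(\{v_{l+1}\})>\tfrac12(OPT-x)$ with $x:=\sigma(S_l)$, together with $x+y\ge(1-1/e)\,OPT$, the best obtainable bound on $\max(x,y)$ is attained near the crossing point $x=\tfrac12(OPT-x)$, i.e.\ $x=OPT/3$, where both constraints permit $\max(x,y)$ arbitrarily close to $OPT/3\approx 0.333\,OPT$, which is strictly below $(1-1/\sqrt{e})\,OPT\approx 0.393\,OPT$. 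So no optimization over the feasible region you describe can land on the target constant: the two inequalities you propose to combine are provably too weak, and what is missing is an idea, not bookkeeping.

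The missing idea is the one the paper uses in its Cases~1 and~2.1. If $c(S)<b/2$, then every node Naive Greedy rejected was rejected against a current set of cost below $b/2$ and therefore has individual cost exceeding $b/2$; since $c(S^*)\le b$, at most one node of $S^*$ lies outside $S$. Writing $S^*\setminus S=\{v\}$, submodularity gives $\sigma(S^*\cap S)+\sigma(v)\ge \sigma(S^*)$, and one splits on whether any singleton has spread exceeding $\tfrac12\sigma(S^*)$: if yes, $s_{max}$ alone already achieves $\tfrac12\sigma(S^*)>(1-1/\sqrt{e})\,\sigma(S^*)$; if no, then $\sigma(v)<\tfrac12\sigma(S^*)$ forces $\sigma(S)\ge\sigma(S^*\cap S)\ge\tfrac12\sigma(S^*)$ by monotonicity. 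Replacing your heavy case by this dichotomy, and keeping your light case as the complementary branch $c(S)\ge b/2$, closes the proof exactly as the paper does.
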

\vspace{0.1in}

By considering the candidate solution with the maximum influence spread, Algorithm 2 guarantees the approximation ratio within a constant factor, while Algorithm 1 is unbounded.
Note that Algorithm~\ref{algo:improvedgreedy} is different from CELF presented by Leskovec {\it et al.} in~\cite{leskovec07}. CELF runs Naive Greedy on the budgeted and the unit-cost (by setting all costs to one) versions of the problem, and selects the set with the maximum influence spread.
While finding the seed set to maximize IM consumes more time than what it takes to select a single node with the largest spread, CELF can only guarantee a looser bound of $\frac{1}{2}(1-1/e)$ ($\sim 0.316$).

\paragraph*{Complexity} Let $T$ be the maximum time needed to calculate the value of $\sigma(S), \forall S \subseteq V$. Algorithm~\ref{algo:naivegreedy} runs in $O(n^2T)$ time where $n$ is the number of nodes (i.e. $n = |V|$). Finding $S_1$ therefore costs $O(n^2T)$. $s_{max}$ can be determined in in $O(nT)$ time. Algorithm~\ref{algo:improvedgreedy} therefore runs in $O(n^2T)$ time. Note in~\cite{Khuller199939} that Greedy with partial enumeration heuristic can achieve an approximation guarantee of $(1-1/e)$. However, the improvement is attained at the expense of much higher computation complexity of $O(n^4d)$~\cite{chekuri04maximum}.

\vspace{0.1in}
Algorithm~\ref{algo:improvedgreedy} calls $\sigma(.)$ as a subroutine. The efficiency of $\sigma(.)$ computation is thus critical to the overall running time of the algorithm. In the following sections, we develop efficient algorithms for approximating the spread function $\sigma(.)$. We first consider the special case when the network is a directed acyclic graph (DAG). Then, we provide two DAG construction algorithms from a general network graph. Finally, some techniques to further optimize the execution of Algorithm~\ref{algo:improvedgreedy} is presented.

\section{Determining Influence Spread on DAG}
\label{sec:spreadondag}
Given a seed set, estimating value of the $\sigma(.)$ from that seed set was proven to be a \#P-complete problem~\cite{kempe03}. We show in this section that under the IC model, calculation of $\sigma(.)$ remains \#P-complete even when the underlying network graph is a DAG. Then we establish the equivalence between computing $\sigma(.)$ on a DAG and the computation of marginal probabilities in a Bayesian network.

\subsection{Hardness of Computing Influence Spread on DAGs}

In~\cite{kempe03}, Kempe {\it et al.} proposed an equivalent process of influence
spread under the IC model, where at the initial stage, an edge $(u,v)$ in $\MG$
is declared to be {\it live} with probability $p(u,v)$ resulting in a subgraph of
$\MG$. A node $u$ is active if and only if there is at least one path from some
node in $S$ to $u$ consisting entirely of {\it live edges}. In general graphs, the
influencer-influencee relationship may differ in one realization to another for
bi-directed edges. In a DAG, on the other hand, such relationship is fixed and
is independent of the outcome of the coin flips at the initial stage (other than the fact that some
of the edges may not be present). Let $x_u, u \in V$ denotes the binary random
variable of the active state of node $u$, namely, $\Prob{x_u = 1} = p(u)$. For
each node $v$ in $S$, $\Prob{x_v = 1} = 1$. If a node $u \not\in S$ does
not have any parent in $\MG$ then $\Prob{x_u = 1} = 0$.  From $\MG$, the
conditional probability $p(x_u|x_{Par(u)})$ is uniquely determined by the edge
probability, where $x_{Par(u)}$ denotes the states of the parents of node $u$.
In other words, influence spread can be modeled as a Bayesian network.  If node
$u$ does not have any parent, $p(x_u|x_{Par(x_u)}) = p(x_u)$.  The joint
distribution is thus given by,
\beq
p(x_1, x_2, \ldots, x_{n}) = \prod_{i=1}^{n}{p(x_i|x_{Par(x_i)})}.
\label{eq:bayesian}
\eeq

Given the outcome of coin flips $C$, $\sigma_C(S) =\sum_{u \in V}{x_u}$.
Therefore,
\beq
\sigma(S) = \E(\sigma_C(S)) = \sum_{u \in V}{\E(x_u)} = \sum_{u\in
V}{p(u)}.
\label{eq:sigma}
\eeq
The second equality is due to the linearity of expectations. To
compute $p(u)$, we can sum \eqref{eq:bayesian} over all possible configurations
for $x_v, v \in V\backslash u$. Clearly, such a naive approach has complexity
that is exponential in the network's treewidth. In fact, the marginalization
problem is known to be \#P-complete on a DAG. However, since computing
influence spread on a DAG can be reduced to a special instance of the
marginalization problem, it remains to be shown if the former problem is
\#P-complete. The main result is summarized in the following theorem\footnote{All proofs are presented in the Appendix}.

\begin{thm}
\label{thm:sharpphard}
Computing the influence spread $\sigma(S)$ on a DAG given a seed set $S$ is \#P-complete.
\end{thm}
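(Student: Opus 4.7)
The plan is to prove the theorem in two parts: membership in \#P, which is essentially routine, and \#P-hardness via a polynomial-time Turing reduction from the s-t reliability problem on directed acyclic graphs, which Provan showed to be \#P-complete.

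For the membership argument, I would observe that if the edge probabilities $p(u,v)$ are rationals with polynomial-size representation, then the IC dynamics on $\MG$ are equivalent to sampling a live-edge subgraph from among the $2^m$ possibilities and counting the nodes reachable from $S$ via BFS. After multiplying through by the common denominator of the edge probabilities, $\sigma(S)$ is expressible as a polynomially bounded integer combination of counts over witness structures (live-edge subgraphs in which a specified node is reachable from $S$), which places the problem in \#P.

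The heart of the proof is the hardness reduction. Given a DAG s-t reliability instance $(G, s, t)$ with rational edge probabilities, I construct $G'$ from $G$ by adjoining a single fresh sink node $w$ together with one edge $(t, w)$ of propagation probability $1$; both $G$ and $G'$ remain DAGs, and every edge probability retains polynomial bit-size. Since $w$ is a new sink and all other structure is shared, every node $v$ of $G$ contributes identically to $\sigma_G(\{s\})$ and $\sigma_{G'}(\{s\})$, while the new sink satisfies $p_{G'}(w) = p_G(t)$. Applying \eqref{eq:sigma} therefore gives
\[
\sigma_{G'}(\{s\}) - \sigma_{G}(\{s\}) \;=\; p_{G}(t),
\]
and $p_G(t)$ is precisely the s-t reliability of $G$, since in a DAG node $t$ is activated from seed $\{s\}$ if and only if some live-edge path from $s$ to $t$ exists. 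Two calls to a hypothetical polynomial-time algorithm for $\sigma(\cdot)$ on DAGs would therefore settle a \#P-complete problem in polynomial time, completing the reduction.

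The main obstacle I anticipate is that $\sigma$ aggregates activation probabilities across all nodes, whereas the known \#P-hardness for DAG reliability concerns a single-pair reachability probability. The one-leaf gadget resolves this cleanly by inserting a node whose activation probability coincides exactly with the target's while leaving every other marginal unchanged, so the difference of two spread values isolates the single-pair reliability. The remaining bookkeeping -- verifying that $G'$ stays acyclic, that a propagation probability of $1$ is admissible in the IC model, and that the reduction itself runs in polynomial time -- is immediate.
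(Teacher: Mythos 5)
Your proof is correct, and its core gadget --- adjoining a fresh sink $w$ reached from $t$ by a probability-$1$ edge and reading off $\sigma_{G'}(\{s\})-\sigma_{G}(\{s\})=p_G(t)$ with two oracle calls --- is exactly the reduction the paper relies on; the paper does not rewrite it but cites it as ``the same argument as in PMIA.'' Where you genuinely diverge is in how the \#P-completeness of the source problem, $s$-$t$ connectedness/reliability restricted to DAGs, is established. You cite Provan's theorem that two-terminal reliability remains \#P-complete on acyclic graphs, whereas the paper re-derives this from scratch: its only written technical content is a lemma reducing SAT$'$ to $s$-$t$ connectedness on DAGs, obtained by observing that Valiant's original construction (edges go only from literals of smaller index to larger, and from literals to clauses) is already acyclic. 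The two routes are interchangeable --- yours is shorter and leans on an external citation, the paper's is self-contained and makes explicit which feature of Valiant's gadget guarantees acyclicity --- so in effect each of you spells out the half of the argument that the other delegates to a reference. One small caveat applying to both: $\sigma(S)$ is rational-valued, so ``\#P-complete'' is properly a statement about the associated denominator-cleared counting problem; your membership paragraph gestures at this correctly but, like the paper, does not fully formalize it.
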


\subsection{Estimating $\sigma(\cdot)$ via Belief Propagation}
Belief propagation (BP) is a message passing algorithm for performing inference on
graphical models, such as Bayesian networks and Markov random fields. It
calculates the marginal distribution for each unobserved node, conditional on
any observed nodes~\cite{understandingbp}. For {\it singly-connected} DAGs, where between any
two vertices there is only one simple path, the BP
algorithm in~\cite{pearl} computes the exact solution with $O(n)$ complexity. For
multi-connected DAGs, where  multiple simple paths may exist between two
vertices, belief propagation and many of its variants~\cite{understandingbp,junctiontree,loopy} have been shown to work well in general. Exact solutions
such as junction tree~\cite{junctiontree} may incur the worst case complexity
exponential to the number of vertices  due to the need to enumerate all cliques
in the DAG.


BP algorithms take as input a factor graph or a description of the underlying Bayesian Network. In the context of influence spreading, each node only has two states: active and inactive. BP algorithms calculate the probability of each node in either states. $\sigma(\cdot)$ can then be determined by summing up the probability of nodes being active.

\paragraph*{Computation complexity}
The complexity of $\sigma(\cdot)$ calculation is dominated by the execution of
the BP algorithm. A variety of BP algorithms exist. In this work, we adopt
the Loopy Belief Propagation (LBP) algorithm which was shown to perform well for various
problems~\cite{Frey01veryloopy, turbodecoding}. LBP takes $O(M^d)$ to estimate
the active probability of a node, where $M$ is the number of
possible labels (states) for each variable ($M = 2$), and $d$ is the maximum in-degree.
We denote by $n_0$ the number of vertices in a DAG. Thus, the complexity of
LBP is $O(n_02^d)$.

\subsection{A Single Pass Belief Propagation Heuristic for $\sigma(\cdot)$ Estimation}
Calculating $\sigma(\cdot)$ with LBP produces highly accurate
results, but the computation time remains to be high when the graph is
multi-connected. The main complexity arises from the fact that the activation
of parents of a node may be correlated in a multi-connected graph. Thus, in
computing the activation probability of the node, one needs to account for the
joint distribution of its parent nodes. Next, we propose a single pass belief
propagation (SPBP) algorithm that ignores such correlation in determining
$\sigma(\cdot)$. Note that the heuristic is exact when the graph is
singly-connected.

Let $\MD(\cdot)$ be the input DAG. Consider a node $v \in \MD(\cdot)$. Given the activation probabilities of its parents $Par(v)$, we approximate $p(v)$ as,
\beq
p(v) = 1 - \prod_{u \in Par(v)}(1 - p(u)p(u,v)).
\eeq
The algorithm is summarized in Algorithm~\ref{algo:polynomialp}. It starts with the seed nodes and proceeds with the topological sorting order. The total complexity is $O(n_0d)$. Clearly, SPBP is much faster than LBP.
\begin{algorithm}[t]
\small
\caption{Single-Pass Belief Propagation (SPBP)}
\label{algo:polynomialp}
\SetKwInOut{Input}{input}
\SetKwInOut{Output}{output}
\Input{$\MD(S)$}
\BlankLine
\nl $\sigma(S)$ = 0;\\
\nl \ForEach{$v \in \MD(S)$} {
\nl     \If {$v \in S$}{
\nl         $p(v) = 1$\\
        } \Else {
\nl         $p(v) = 1 - \prod_{u \in Par(v)}(1 - p(u)p(u,v))$
        }
\nl     $\sigma(S) = \sigma(S) + p(v)$
    }
\Output{$\sigma(S)$}
\end{algorithm}

\section{DAG Construction}
\label{sec:dagmodels}
In general, real social networks are not DAGs (with the exception of
advisor-advisee and parent-child relationship, for instance, which exhibit a
natural hierarchy). To apply the BP algorithm in computing influence spread, one
needs to selectively prune edges and reduce the graph to a DAG. Clearly, there
are many ways to do so. The challenge is to find a DAG that approximates well
the original graph in influence spread. In this section, we introduce two DAG
construction algorithms, both retaining important edges where influences are likely
to travel.

\subsection{Localizing Influence Spread Region}
\label{sec:daglocalize}
One important observation in \cite{pmia} is that the influence of a seed node
diminishes quickly along a path away from the seed node. In other words, the
``perimeter" of influence or the {\it influence region} of a seed node is in fact
very small. One way to characterize the {\it influence region} of a node $v$ is
through the union of maximum influence paths defined next.

\begin{defn} (Path Propagation Probability) \\
For a given path $P(u,v) = \{u_1, u_2, \ldots, u_l\}$ of length $l$ from a vertex $u$ to $v$, with $u_1 = u, u_l = v$ and $u_2,\ldots,u_{l-1}$ are intermediate vertices, define the propagation probability of the path, $p(P)$, as:
\begin{equation}
p(P(u,v)) = \prod_{i=1}^{l-1}p(u_1, u_{i+1}).
\end{equation}
$p(P(u,v))$ can be thought as the probability that $u$ will influence $v$ if $u$ is selected as a seed node.
Obviously, the longer the path length $l$, the smaller the chance that $u$ can spread its influence to $v$.
\end{defn}

\begin{defn} (Maximum Influence Path) \\
Denote by $\MP(\MG,u,v)$ the set of all paths from $u$ to $v$ in $\MG$. The maximum influence path $MIP(\MG,u,v)$ from $u$ to $v$ is defined as:
\begin{equation}
MIP(\MG,u,v) = \arg\max_P\{p(P)|P\in \MP(\MG, u, v)\}.
\end{equation}
Ties are broken in a predetermined and consistent way such that $MIP(\MG,u,v)$ is always unique, and any sub-path in $MIP(\MG,u,v)$ from $x$ to $y$ is also the $MIP(\MG,x,y)$. In order to localize the influence region of nodes and reduce the complexity, we only consider influence spread on maximum influence paths.
\end{defn}

\begin{defn} (Maximum Influence Out-Arborescence) \\
For a graph $\MG$, an influence threshold $\theta$, the maximum influence out-arborescence of a node $u \in V, MIOA(\MG, u, \theta)$, is defined as:
\begin{equation}
MIOA(\MG,u,\theta) = \bigcup_{v\in V, p(MIP(\MG,u,v)) \geq \theta}MIP(\MG,u,v).
\end{equation}
\end{defn}

$MIOA(\MG,u,\theta)$ is defined as the union of $MIP$'s from $u$ to all other
nodes in $V$. $MIP$'s with propagation probabilities less than a threshold
$\theta$ are not included to reduce the size of $MIOA$. One can think of
$MIOA(\MG,u,\theta)$ as a {\it local region} where $u$ can spread its influence
to. $MIOA(\MG,u,\theta)$ can be computed by first finding the Dijkstra tree
rooted at $u$ with edge weight $-\log(p(u,v))$ for edge $(u,v)$, and then
removing the paths whose cumulative weights are too high. By tuning the
parameter $\theta$, influence regions of different sizes can be obtained.
For a single node, its MIOA is clearly a tree. For multiple seed nodes,
we build upon the idea of local influence region and propose two algorithms.

\subsection{Building DAGs from a Seed Set}
\paragraph*{DAG 1}
We observe that any DAG has at least one topological ordering. Conversely,
given a topological ordering, if only edges going from a node of low rank to
one with high rank are allowed, the resulting graph is a DAG.

To obtain the topological ordering given seed set $S$, we first introduce a
(virtual) super root node $R$ that is connected to all seed nodes with edge
probability 1. Let
$\MG_R = (V_{\MG_R}, E_{\MG_R})$ where $V_{\MG_R} = V \cup \{R\}$ and $E_{\MG_R}
= E \cup \{(R,S_k)|\forall S_k \in S\}$. We build $MIOA(\MG_R,R,\theta)$ by
calculating a Dijkstra tree from $R$. After removing $R$ and its edges from
$MIOA(\MG_R,R,\theta)$, we obtain a singly connected DAG $\MD_1 = (V_{\MD_1},
E_{\MD_1})$ on which BP algorithms can be directly applied and used to estimate the
influence spread from $S$. However, $\MD_1(\cdot)$ is very sparse (with $n-k$ edges)
since many edges are removed.

We then augment $\MD_1(\cdot)$ with additional edges. Note that $MIOA(\MG_R,R,\theta)$
provides a topology ordering. More specifically, let the rank of node $v$ be
the sum weight of the shortest path from $R$, namely,

\begin{equation}
\label{eq:weightlist}
r(v) = \min(-\log(p(P(s,v)))), \forall s \in S.
\end{equation}

Rank grows as the node is further away from $R$.
We include in $\MD_1(\cdot)$ all edges in $\MG$ whose end points are in
$\MD_1(\cdot)$ and go from a node with lower rank to one with higher rank.
Clearly, the resulting graph is a DAG.  The  DAG constructing
procedure is illustrated in Figure~\ref{fig:dag1} and summarized in
Algorithm~\ref{algo:dag1}.

\begin{figure}[h]
\begin{center}
\includegraphics[width=3.6in]{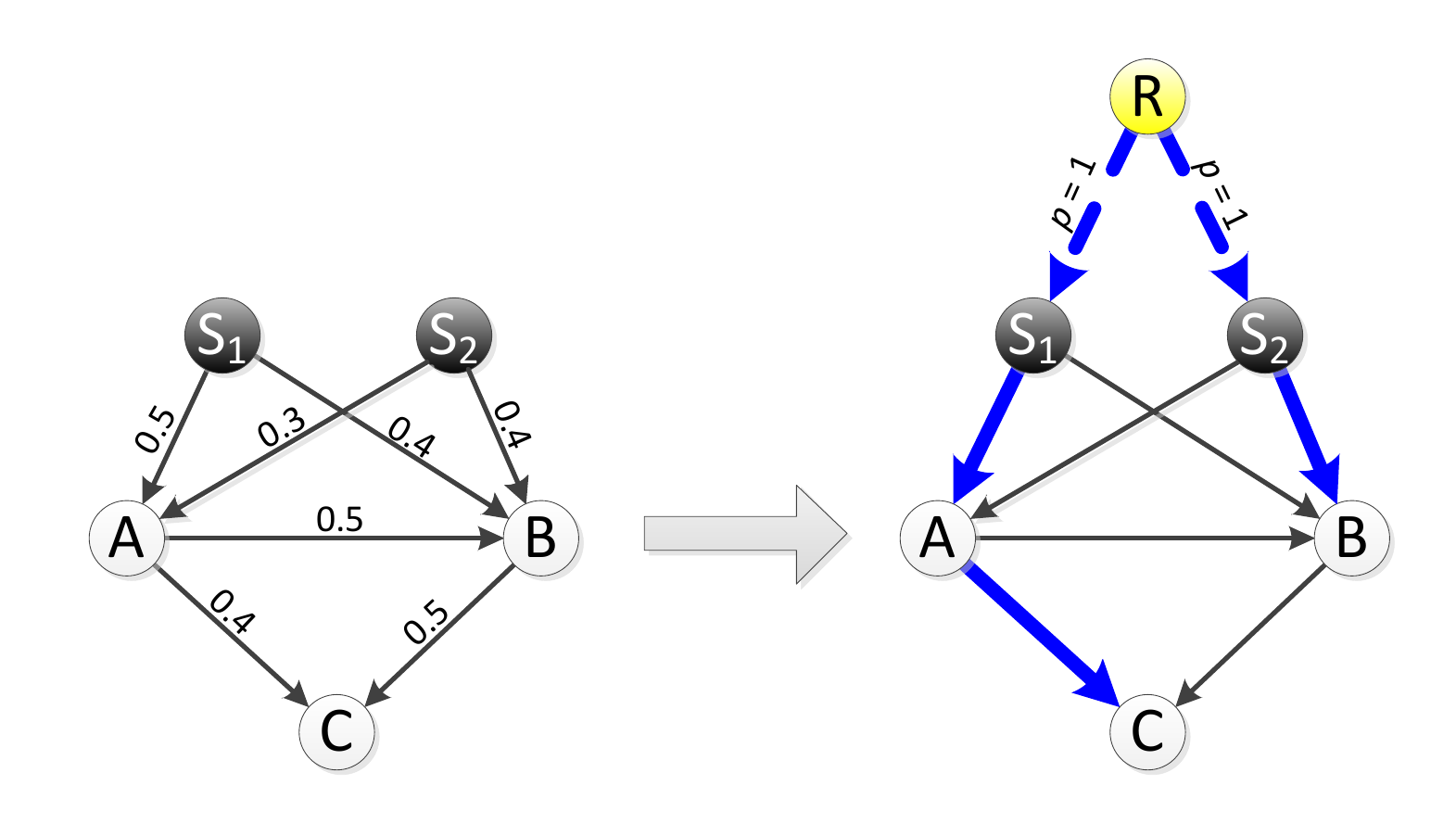}
\\
\centering
{\normalsize
\begin{tabular}{ c | c c c c c}
Node & $S_1$ & $S_2$ & $A$ & $B$ & $C$ \\
\hline
$r($Node$)$ & 0 & 0 & 0.301 & 0.398 & 0.699
\end{tabular}
}
\caption{DAG due to Algorithm~\ref{algo:dag1}. $S_1$ and $S_2$ are seed nodes.
Edges in $MIOA(\MG_R,R,\theta)$ are in bold. $(S_1, B)$, $(S_2, A)$, $(A, B)$,
and $(B, C)$ are added into $\MD_1(S)$ to improve inference accuracy.
$\theta = 0.0001$.}
\label{fig:dag1}
\end{center}
\end{figure}


\begin{algorithm}[t]
\small
\caption{Calculate $\MD_1(S)$ from a seed set $S$}
\label{algo:dag1}
\SetKwInOut{Input}{input}
\SetKwInOut{Output}{output}
\Input{$\MG, S, \theta$}
\BlankLine
\nl Build $\MG_R = (V_{\MG_R}, E_{\MG_R})$\\
\nl $\MD_1(S) = MIOA(\MG_R,R,\theta) \backslash R$\\
\nl Calculate $r(v), \forall v \in V_{\MD_1}$ (Eq.~\eqref{eq:weightlist})\\
\nl \ForEach{$(u,v) \in V_{\MG_R}$} {
\nl     \If {$r(u) < r(v)$ and $(u,v) \in E$}{
\nl         $\MD_1(S) = \MD_1(S) \cup (u,v)$\\
        }
    }
\Output{$\MD_1(S)$}
\end{algorithm}


\paragraph*{DAG 2}
In the second algorithm, we first compute the $MIOA$ from each seed node and
take the union of $MIOA(\MG,s,\theta),\forall s \in S$. Denote the resulting
graph $\MD_2(S) = (V_{\MD_2}, E_{\MD_2})$. Note that $\MD_2(S)$ is not necessary
a DAG as there could be circles.
 To break the cycles, certain edges need to be removed. We adopt a similar approach
as in Algorithm~\ref{algo:dag1}. A node $v$ is associated with a rank $r(v)$ as
in (\ref{eq:weightlist}). Only edges that connect a lower ranked node to higher ranked node
are retained. Clearly, the resulting graph is a DAG. The approach is
summarized in Algorithm~\ref{algo:dag2}.

\begin{algorithm}[t]
\small
\caption{Calculate $\MD_2(S)$ from a seed set $S$}
\label{algo:dag2}
\SetKwInOut{Input}{input}
\SetKwInOut{Output}{output}
\Input{$\MG, S, MIOA(\MG,v,\theta), \forall v \in V$}
\BlankLine
\nl $\MD_2(S) = \bigcup_{\forall s \in S}MIOA(\MG,s,\theta)$\\
\nl Calculate $r(v), \forall v \in V_{\MD_2}$ (Eq.~\eqref{eq:weightlist})\\
\nl \ForEach{$(u,v) \in \MD_2(S)$} {
\nl     \If {$r(u) \geq r(v)$}{
\nl         $\MD_2(S) = \MD_2(S) \backslash (u,v)$\\
        }
    }
\Output{$\MD_2(S)$}
\end{algorithm}


\begin{figure}[h]
\begin{center}
\includegraphics[width=3.6in]{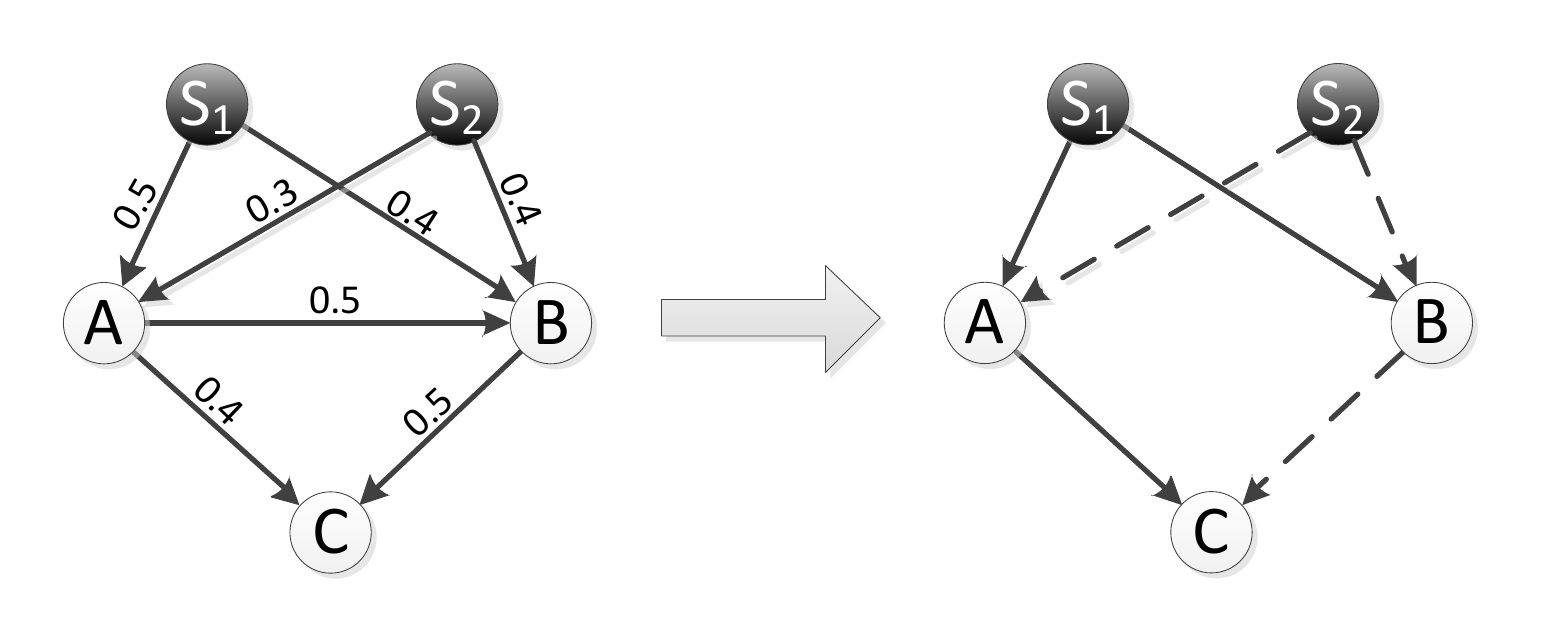}
\\
\centering
{\normalsize
\begin{tabular}{ c | c c c c c}
Node & $S_1$ & $S_2$ & $A$ & $B$ & $C$ \\
\hline
$r($Node$)$ & 0 & 0 & 0.301 & 0.398 & 0.699
\end{tabular}
}
\caption{DAG due to Algorithm~\ref{algo:dag2}. $S_1$ and $S_2$ are seed
nodes. $\MD_2(S)$ is the union of $MIOA(\MG,S_1,\theta)$ (solid edges) and
$MIOA(\MG,S_2,\theta)$ (dashed edges). $\theta = 0.0001$.}
\label{fig:dag2}
\end{center}
\end{figure}

The next proposition provides the relationship between DAGs constructed by
Algorithm~\ref{algo:dag1} and~\ref{algo:dag2}.

\begin{prop}
\label{thm:dag1vsdag2}
Given a fixed influence threshold $\theta$, let $\MD_1(\cdot) =
(V_{\MD_1},E_{\MD_1})$ and $\MD_2(\cdot) = (V_{\MD_2},E_{\MD_2})$ be the DAGs
constructed by Algorithm~\ref{algo:dag1} and Algorithm~\ref{algo:dag2}. Then,
$V_{\MD_1} = V_{\MD_2}$ and $E_{\MD_2} \subseteq E_{\MD_1}$.
\end{prop}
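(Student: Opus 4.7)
The plan is to first reconcile the rank functions used by the two algorithms, then handle the vertex equality and edge inclusion claims in turn.

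The key preliminary observation is that in Algorithm~\ref{algo:dag1}, the rank $r(v)$ is computed as the Dijkstra distance from $R$ in $\MG_R$ with edge weights $-\log p(u,v)$. Since $R$ is joined to every seed by an edge of probability $1$ (weight $0$), any shortest path from $R$ to $v$ must pass through exactly one seed $s \in S$ and then follow the maximum-influence path $MIP(\MG,s,v)$ in the original graph. Hence the rank computed inside $\MG_R$ equals $\min_{s\in S}(-\log p(MIP(\MG,s,v)))$, which is exactly the definition of $r(v)$ used in Algorithm~\ref{algo:dag2} (Eq.~\eqref{eq:weightlist}). So both algorithms assign the same rank to every node that is reachable.

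For the vertex equality I would argue as follows. A node $v$ belongs to $V_{\MD_1}$ iff $v$ lies in $MIOA(\MG_R,R,\theta)$, iff the best-weighted path from $R$ to $v$ has total probability at least $\theta$, iff $\max_{s\in S} p(MIP(\MG,s,v)) \geq \theta$ by the same path-splitting argument as above. On the other hand, $v$ belongs to $V_{\MD_2}$ iff $v$ lies in $MIOA(\MG,s,\theta)$ for some $s\in S$, iff $p(MIP(\MG,s,v))\geq \theta$ for some $s$, which is the same condition. Thus $V_{\MD_1}=V_{\MD_2}$.

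For the edge inclusion, consider any $(u,v) \in E_{\MD_2}$. Such an edge must have been present in $\bigcup_{s\in S} MIOA(\MG,s,\theta)$, and every $MIOA$ is a subgraph of $\MG$, so $(u,v)\in E$. Moreover, the pruning step in Algorithm~\ref{algo:dag2} removes every edge with $r(u)\geq r(v)$, so surviving edges satisfy $r(u)<r(v)$. Finally, both endpoints lie in $V_{\MD_2}=V_{\MD_1}$. These are exactly the three conditions tested by the final loop of Algorithm~\ref{algo:dag1}, which therefore adds $(u,v)$ to $E_{\MD_1}$. Hence $E_{\MD_2}\subseteq E_{\MD_1}$, completing the proof.

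There is not really a hard step here; the only thing that could trip one up is verifying that the rank function definitions truly coincide, because Algorithm~\ref{algo:dag1} implicitly defines it through $\MG_R$ while Algorithm~\ref{algo:dag2} cites Eq.~\eqref{eq:weightlist} directly. Once that identification is made, the rest is bookkeeping over the construction steps.
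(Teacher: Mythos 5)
Your proof is correct and follows essentially the same route as the paper: establish vertex equality from the fact that membership in either DAG is governed by the same threshold condition on $r(v)$, then verify that any edge surviving Algorithm~\ref{algo:dag2}'s pruning satisfies exactly the three conditions under which Algorithm~\ref{algo:dag1}'s augmentation loop adds an edge. Your preliminary reconciliation of the two rank functions (via the zero-weight edges from $R$ to the seeds) is a step the paper's proof leaves implicit, so your write-up is if anything more complete than the original.
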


\paragraph*{Computation complexity}
Building the Dijkstra tree from a source node takes $O(n_0\log{n_0})$, where $n_0$ is the maximum number of vertices in the resulting DAG. Calculating the node rank $r(\cdot)$ takes $O(n_0)$, the union operation in DAG 2 takes $O(n_0 -1)$, and the edge augmenting and pruning in DAG 1 and DAG 2 takes $O(m_0)$ and $O(\min(m_0, k(n_0 - 1)))$, respectively, where $m_0$ is the maximum number of edges in a DAG and $k$ is the seed set cardinality.

Thus, the running time of DAG 1 and DAG 2 are $O(n_0\log{n_0})$ and $O(n_0)$, respectively. Note that DAG 2 calculation requires the availability of $MIOA(\MG,v,\theta), \forall v \in V$ first, which can be built at the initialization stage at the cost of $O(nn_0\log{n_0})$. Assuming that $k$ is small and $\theta$ is properly selected, we have $n0 \ll n$.

\section{Optimization of Seed Selection}
\label{sec:proposedalgorithm}



In each round of Naive Greedy, a seed node with the maximum
incremental spread-cost ratio is selected, namely, $v = \max_{v \in
V\backslash S}\delta(v)$. Recall that $\delta(v) =
(\sigma(S\cup v) - \sigma(S))/c(v)$ is the spread increment ratio of $v$ under $S$.
Initially, when $S = \emptyset$, $\delta(v) = \sigma(v)/c(v)$.
Evaluating $\delta(v)$ at each iteration for all $v \in V$ dominates the overall computation complexity.

To accelerate the execution of Naive Greedy, one can try to improve on
two aspects, namely, 1) limiting the candidate set of nodes to pick from for the next
seed, and 2) reducing the complexity of computing the spread increments. CELF
algorithm~\cite{leskovec07} eliminates many nodes from being evaluated. We focus on the
second aspect. The proposed mechanism can be used in conjunction with the idea from CELF.

Recall in Section~\ref{sec:daglocalize}, we use $MIOA$ to localize the influence region of a
node. Consider for now that influence from a node can only reach nodes in its
$MIOA$. Then, we make the following claim.

\vspace{0.05in}
\begin{prop}
\label{prop:local}
Given the current seed set $S$, adding $u$ to $S$ will not change the spread
increment of $v$, namely, $\delta_S(v) = \delta_{S\cup u}(v)$ if
$MIOA(\MG,u,\theta)$ and $MIOA(\MG,v,\theta)$ have no common vertex.
\end{prop}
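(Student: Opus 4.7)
The plan is to decompose the spread function into per-node activation probabilities and then use the localization assumption stated just before the proposition (that influence from a seed only propagates within its $MIOA$) to argue that changing the seed status of $u$ only affects the activation probabilities of nodes in $MIOA(\MG,u,\theta)$. Because $c(v)$ is a constant in both $\delta_S(v)$ and $\delta_{S\cup u}(v)$, it suffices to establish $\sigma(S\cup v)-\sigma(S)=\sigma(S\cup u\cup v)-\sigma(S\cup u)$.

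First, I would recall from \eqref{eq:sigma} that $\sigma(T)=\sum_{w\in V}p_T(w)$ where $p_T(w)$ denotes the activation probability of $w$ under seed set $T$. Under the localization assumption, $p_T(w)$ depends only on those seeds in $T$ whose $MIOA$ contains $w$; equivalently, if $s\notin T$ and $w\notin MIOA(\MG,s,\theta)$, then $p_{T\cup s}(w)=p_T(w)$. Using this, I would rewrite each spread increment as a sum over nodes and observe that the only terms that can be nonzero in $\sigma(S\cup v)-\sigma(S)$ are those with $w\in MIOA(\MG,v,\theta)$, and similarly for $\sigma(S\cup u\cup v)-\sigma(S\cup u)$.

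Next, I would use the disjointness hypothesis $MIOA(\MG,u,\theta)\cap MIOA(\MG,v,\theta)=\emptyset$ to match terms. For every $w\in MIOA(\MG,v,\theta)$ we have $w\notin MIOA(\MG,u,\theta)$, so adding $u$ to the seed set does not affect $w$'s activation probability. Consequently $p_{S\cup u}(w)=p_S(w)$ and $p_{S\cup u\cup v}(w)=p_{S\cup v}(w)$, which immediately gives
\begin{equation*}
p_{S\cup v}(w)-p_S(w)=p_{S\cup u\cup v}(w)-p_{S\cup u}(w).
\end{equation*}
Summing over $w\in MIOA(\MG,v,\theta)$ (and noting that all other terms vanish) yields equality of the two spread increments, and dividing by $c(v)$ finishes the argument.

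The main obstacle is formalizing the localization assumption in terms of activation probabilities. The hypothesis as phrased in the paper is informal, so I would state the precise invariant used, namely $p_{T\cup s}(w)=p_T(w)$ whenever $w\notin MIOA(\MG,s,\theta)$, and point out that this is exactly what it means to restrict influence computation to each seed's $MIOA$ (as is done implicitly in Algorithms~\ref{algo:dag1} and~\ref{algo:dag2}). Once this invariant is made explicit, the remainder of the proof is a short bookkeeping argument.
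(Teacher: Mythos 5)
Your proposal is correct and follows essentially the same route as the paper: the paper's (one-sentence) proof likewise observes that, under the localization assumption, including $u$ in the seed set cannot affect $p(w)$ for any $w\in MIOA(\MG,v,\theta)$ since that set is disjoint from $MIOA(\MG,u,\theta)$, and hence the spread increment of $v$ is unchanged. Your version simply makes explicit the per-node decomposition of $\sigma(\cdot)$ from \eqref{eq:sigma} and the invariant $p_{T\cup s}(w)=p_T(w)$ for $w\notin MIOA(\MG,s,\theta)$, which the paper leaves implicit.
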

\vspace{0.05in}

As a result of Proposition~\ref{prop:local}, each time we select a new seed,
only the influence increments of nodes that have overlapping influence regions
with the newly selected seed need to be re-evaluated. Formally, we define the
set of Peer Seeds (PS) of a vertex $v\in V$ as follow:
\begin{equation}
PS(\MG, v,\theta) = \left\{ s \in V | MIOA(\MG, s,\theta)\cap MIOA(\MG, v, \theta) \neq \emptyset \right\}.
\end{equation}

$PS(\MG, v,\theta)$ can be computed efficiently just once at the
beginning when all $MIOA(\MG,v,\theta)$'s are available.

Combining the ideas of 1) limiting the region to be re-evaluated using $PS$, 2) limiting the set of nodes to pick from (adopted from CELF), and 3) picking nodes w.r.t its cost and the remaining budget (Algorithm~\ref{algo:improvedgreedy}), we have the complete procedure to determine the optimal seed set in Algorithm~\ref{algo:newgreedy}.
Figure~\ref{fig:blockdiagram} gives the block diagram of the proposed algorithm.

\begin{figure}[t]
\begin{center}
\includegraphics[width=3.6in]{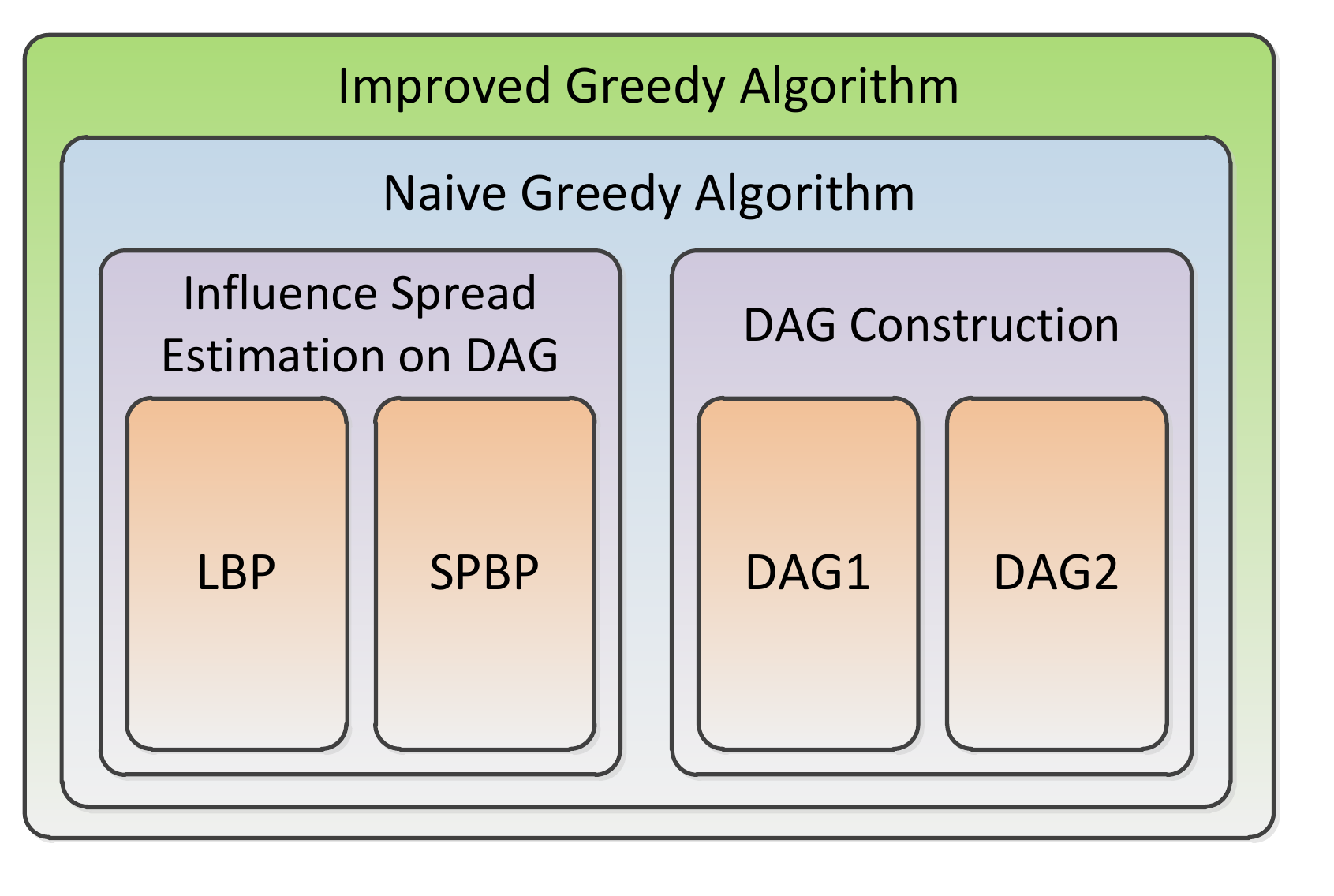}
\caption{The building blocks of our proposed algorithm. Details are presented in the previous sections.}
\label{fig:blockdiagram}
\end{center}
\end{figure}


\begin{algorithm}[h]
\small
\caption{The Proposed Algorithm}
\label{algo:newgreedy}
\SetKwInOut{Input}{input}
\SetKwInOut{Output}{output}
\Input{network graph $\MG(V,E)$ and budget $b$}
\BlankLine
    // {\it initialization}\\
\nl $S = S_1 = s_{max} = \emptyset, \sigma_0 = 0, \theta =$ influence threshold\\
\nl \ForEach{$v \in V$} {
\nl     build $MIOA(\MG, v, \theta)$\\
\nl     $\MD(v) = MIOA(\MG, v, \theta)$\\
\nl     calculate $\sigma(v)$ (LBP or Algorithm~\ref{algo:polynomialp})\\
\nl     $\delta(v) = \sigma(v) / c(v)$\\
\nl     $\delta_{old}(v) = 0$\\
    }
\nl build $PS(\MG, v, \theta), \forall v \in V$\\
    \BlankLine
    // {\it select $s_{max}$}\\
\nl $s_{max} = \argmax_{v\in V}{\sigma(v)}$\\
    \BlankLine
    // {\it select $S_1$}\\
\nl \While{true} {
        // {\it select a new seed}\\
\nl     $u = \arg\max_{v\in V \backslash S}(\delta(v))$\\
\nl     \If {$c(S_1 \cup u) \leq B$} {
\nl         $S_1 = S_1 \cup \{u\}$\\
\nl         $\sigma_0 = \sigma(S)$\\
\nl         $\delta_{old}(v) = \delta(v), \forall v \in V \backslash S_1$\\
\nl         $b = b - c(u)$ \\
            // {\it update incremental influence spread}\\
\nl         $\delta_{max} = 0$\\
\nl         \ForEach{$v \in PS(\MG, u,\theta) \backslash S_1$} {
\nl             \If {$\delta_{old}(v) > \delta_{max}$} {
\nl                 build $\MD(S_1 \cup \{v\})$ (Algorithm~\ref{algo:dag1} and~\ref{algo:dag2})\\
\nl                 calculate $\sigma(S_1 \cup \{v\})$ (LBP or Algorithm~\ref{algo:polynomialp})\\
\nl                 $\delta(v) = (\sigma(S_1 \cup \{v\}) - \sigma_0) / c(v)$\\
\nl                 \If {$\delta(v) > \delta_{max}$} {
\nl                     $\delta_{max} = \delta(v)$\\
                    }
                }
            }
        }
\nl     $V = V \backslash u$\\
\nl     \If {$V = \emptyset$ or $b = 0$} {break}
    }
\BlankLine
\nl $S = \argmax{(\sigma(S_1),\sigma(s_{max}))}$\\
\BlankLine
\Output{selected seed set $S$}
\end{algorithm}
\normalsize

The seed selection algorithm proceed as follow:
In the initialization phase (lines 1 -- 8), $MIOA$'s and $PS$'es are constructed.
The second candidate solution $s_{max}$ can be determined in $O(n)$ time (line 9). $S_1$ is computed by executing the loop in lines 10 -- 26. Each node in $V$ is ranked by its incremental spread-cost ratio and can be added to $S_1$ just once. The node with the highest ratio is included in $S_1$ if it does not violate the budget $b$ (line 12), and the corresponding nodes will be re-evaluated (lines 18 -- 24). The procedure terminates once all nodes were considered, or no more budget remains (line 26). Finally, the algorithm compares the spread of $S_1$, $s_{max}$ and returns the solution with the larger spread.

\paragraph*{Computation complexity}
Recall that we denoted by $n_0$ the largest number of vertices, and by $d$ the largest in-degree of a node in a DAG. For each node $v \in V$ in the initialization phase, building $MIOA(\MG, v, \theta)$ takes $O(n_0\log{n_0})$, and estimating $\sigma(v)$ takes $O(n_0d)$ using SPBP and $O(n_02^d)$ using LBP, respectively. Thus, depending on the algorithm used, the running time of initialization is $O(nn_0(\log{n_0} + d))$ or $O(nn_0(\log{n_0} + 2^d))$.

Let $k$ be the number of seeds selected in the main loop (lines 10 -- 26) and $v_0$ be the cardinality of the largest set of peer seeds, namely, $v_0 = \max_{\forall v \in V}\{|PS(\MG,v,\theta)|\} = O(n_0)$. Therefore, nodal influence spread is updated $O(kn_0)$ times. Note that this is much less than the number of updates required by Algorithm 1 ($O(n^2)$) as we do not naively re-evaluate every node.
Each time when the influence spread is updated, we need to rebuild the DAG (line 20 -- takes $O(n_0\log{n_0})$ with DAG 1 or $O(n_0)$ with DAG 2) and calculate the influence spread (line 21 -- takes $O(n_02^d)$ with LBP or $O(n_0d)$ with SPBP).
The total computation complexity for different combinations of algorithms is summarized as follows:

\vspace{0.1in}
\hspace{-0.2in}
\begin{tabular}{c|c|c}
 & DAG 1 & DAG 2 \\
\hline
LBP & \footnotesize$n_0(n + kn_0)(\log{n_0} + 2^d)$ & \footnotesize$n_0(2^d(kn_0 + n) + n\log{n_0})$ \\
\hline
SPBP & \footnotesize$n_0(n + kn_0)(\log{n_0} + d)$ & \footnotesize$n_0(n(\log{n_0} + d) + kn_0(1+d))$
\end{tabular}
\vspace{0.1in}

Clearly, combining DAG 1 and LBP incurs the highest complexity while the combination of DAG 2 and SPBP is the fastest.
From the analysis, it is easy to see that the computation complexity depends on $n_0$ and $d$. The proposed approach is more efficient with smaller $n_0$ and $d$; that is, when the graph is sparse and the edge propagation probabilities are small, both are likely true in social networks.

\section{Evaluation}
\label{sec:eval}
In this section, we evaluate the performance of the proposed framework. First, an illustrative example is provided to highlight the difference in the two DAG construction models, and spread computation methods. Next, we present the implementation details and experimental setup. Finally, we present the results on 1) performance on real-world social networks and 2) impact of network structures using synthetic graphs.

\subsection{An Illustrative Example}

\begin{figure*}[t]
\hspace{-0.2in}
\begin{tabular}{ c c c }
\includegraphics[width=2.3in]{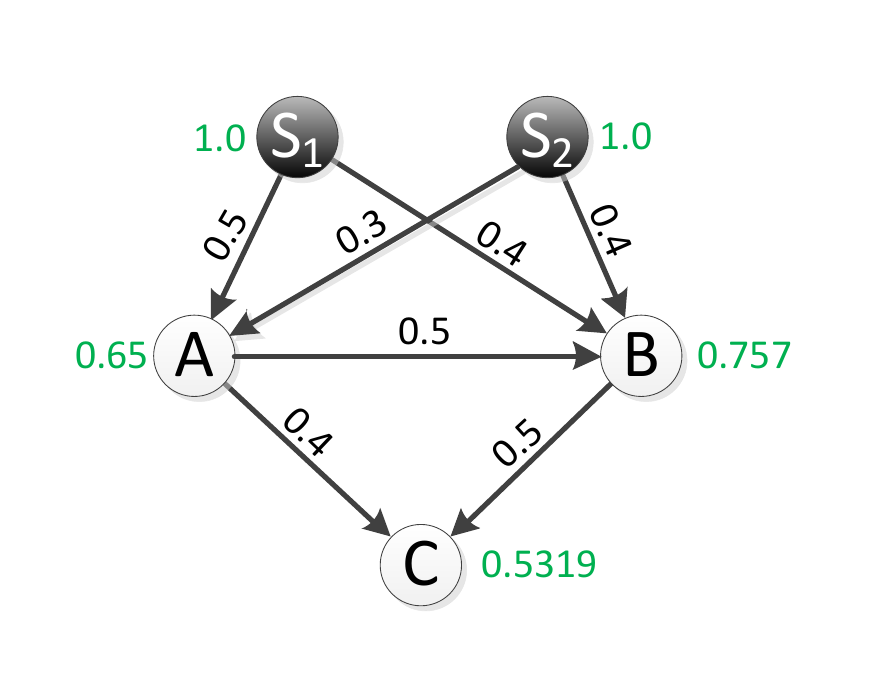} & \includegraphics[width=2.3in]{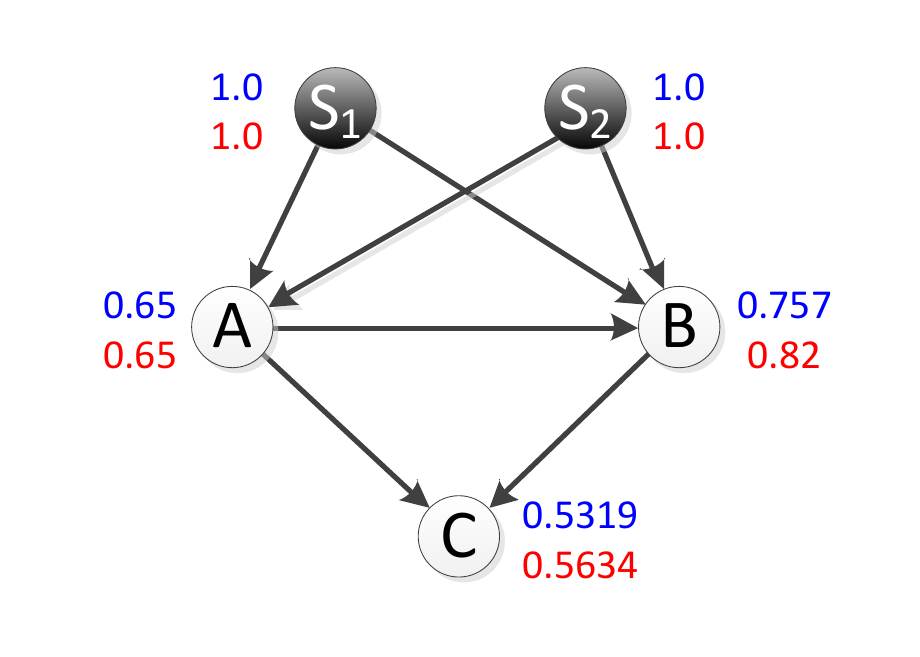} & \includegraphics[width=2.3in]{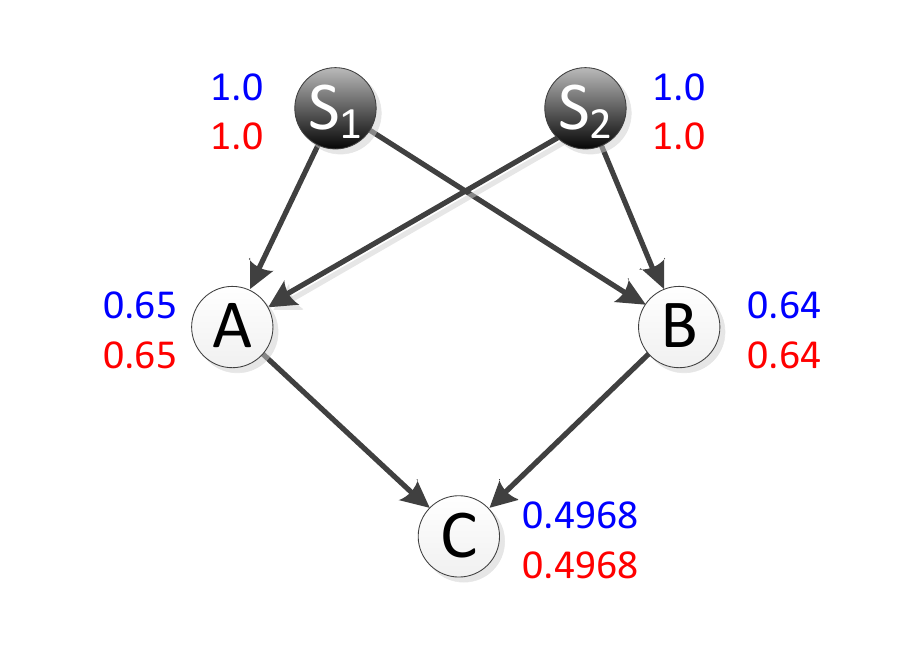}\\
(a) Real active probabilities & (b) Inference on DAG 1 & (c) Inference on DAG 2
\end{tabular}
\caption{Inference result on 2 DAG models. The real active probabilities are in green, LBP results are on top, in blue, and SPBP results are below, in red.}
\label{fig:daginference}
\end{figure*}


Here, we consider a small scale network as shown in Figure~\ref{fig:daginference}(a). Figure~\ref{fig:daginference}(b) and (c) show the DAG constructed by the two models, and the activation probabilities by the two methods. DAG 1 retains all the edges in the network (since the original graph is in fact a DAG), while DAG 2 has fewer edges. When LBP is used to compute the influence spread (the numbers on top next to each node), DAG 1 yields higher activation probability compared to DAG 2 for node B and node C since A has a large influence to B (0.5), which is  not considered in DAG 2. In both DAGs, ignoring the possible correlation among parent nodes in SPBP, the activation probabilities tend to be bigger. Interestingly, though DAG 2 is a multi-connected graph, the activation probabilities computed by both methods are identical. Upon a close examination, we find that even though the graph is multi-connected, the activations of A and B are in fact independent since both are direct descendents of seed nodes with activation probability one.

%
%


\subsection{Experiment Setup}
\paragraph*{The algorithms and implementation}
In addition to the two DAG models and two methods to compute influence spread (a total of 4 combinations DAG1--LBP, DAG1--SPBP, DAG2--LBP, and DAG2--SPBP), we make comparison with the following algorithms:

\begin{itemize}
\item {{\it PMIA($\theta$)}}~\cite{pmia}: a very fast heuristic algorithm that builds a tree-like structure model on which influence is spread. $\theta$ is the influence threshold. We will set $\theta = 1/160$ in all experiments as it was reported to yield the best performance. The PMIA implementation provided by the authors is optimized for IM, and thus its performance for BIM is excluded. 
\item {{\it Greedy/CELF}}: The greedy approach from~\cite{kempe03} with CELF optimization in~\cite{leskovec07}. The number of simulation rounds for each $\sigma(\cdot)$ estimation is 10,000.
\item {{\it Weighted Degree}}: The simple heuristic that selects $k$ seeds that have maximum total out-connection weight. Weighted Degree has been reported to be working very well in practice.
\end{itemize}

We do not compare with other heuristics such as SP1M, SPM~\cite{SP1M}, PageRank~\cite{pagerank}, Random, DegreeDiscountIC~\cite{degreediscountic} or Betweenness centrality~\cite{betweenesscentrality} since they have been reported in previous studies~\cite{pmia,kempe03} to be either unscalable or have poorer performance.

We have implemented the proposed algorithms in C++. All experiments are conducted on a workstation running Ubuntu 11.04 with an Intel Core i5 CPU and 2GB memory. In order to implement LBP algorithm, we use libDAI~\cite{libdai} and Boost~\cite{boost} libraries. We find out through the implementation that running LBP on networks with high in-degree nodes is very costly. Therefore when running LBP, we prune incoming edges on high in-degree nodes such that only ten edges with the highest propagation probabilities are retained. The implementation of PMIA is obtained from its authors. Note that with code optimization, the running time of our algorithms can be further reduced.


\begin{table*}[t]
\caption{Network datasets}
\vspace{0.1in}
\begin{tabular}{|c||c|c|c|c|c|c|c|}
\hline
Name & Type & Nodes & Edges & Density & Max Degree & Mean Degree & Description\\
\hline \hline
 &  &  &  &  &  &  & Email communication within\\[-1ex]
\raisebox{1.2ex}{{\it Email}} & \raisebox{1.2ex}{Email exchange network} & \raisebox{1.2ex}{447} & \raisebox{1.2ex}{5,731} & \raisebox{1.2ex}{0.04} & \raisebox{1.2ex}{195} & \raisebox{1.2ex}{25.64} & a research lab during a year\\
\hline
 &  &  &  &  &  &  & Gnutella peer to peer\\[-1ex]
\raisebox{1.2ex}{{\it p2p-Gnutella}} & \raisebox{1.2ex}{P2P network} & \raisebox{1.2ex}{6,301} & \raisebox{1.2ex}{20,777} & \raisebox{1.2ex}{1e-03} & \raisebox{1.2ex}{97} & \raisebox{1.2ex}{6.59} & network from August 8 2002\\
\hline
 &  &  &  &  &  &  & Slashdot social network\\[-1ex]
\raisebox{1.2ex}{{\it soc-Slashdot}} & \raisebox{1.2ex}{Social network} & \raisebox{1.2ex}{82,168} & \raisebox{1.2ex}{948,464} & \raisebox{1.2ex}{1.6e-03} & \raisebox{1.2ex}{5,064} & \raisebox{1.2ex}{23.09} & from February 2009\\
\hline
 &  &  &  &  &  &  & Amazon product co-purchasing\\[-1ex]
\raisebox{1.2ex}{{\it Amazon}} & \raisebox{1.2ex}{Product co-purchasing network} & \raisebox{1.2ex}{262,111} & \raisebox{1.2ex}{1,234,877} & \raisebox{1.2ex}{2.6e-05} & \raisebox{1.2ex}{425} & \raisebox{1.2ex}{9.42} & network from March 2 2003\\
\hline
\end{tabular}
\label{tab:dataset}
\end{table*}

\paragraph*{Datasets} We use four real-world network datasets from~\cite{snap} and~\cite{emaildataset} to compare performance of different algorithms. The four datasets were selected so as they are representative of the structural features of large-scale social networks, and are of different scales -- from several thousands to millions of edges. The first one is an email exchange network in a research lab, denoted by {\it Email}. Each researcher is a vertex and an email from a researcher $u$ to $v$ constitutes an edge. The second network, denoted by {\it p2p-Gnutella} is a snapshot of the Gnutella peer-to-peer file sharing network from August 2002. Nodes represent hosts in the Gnutella network and edges represent connections between the Gnutella hosts.
The third network comes from Slashdot.org, a technology-related news website, denoted by {\it soc-Slashdot}. In 2002, Slashdot introduced the Slashdot Zoo feature that allows users to tag each other as friends or foes. The network contains friend/foe links between Slashdot users obtained in February 2009. Finally, {\it Amazon} dataset is the product co-purchasing network collected by crawling Amazon website on March 2, 2003. Details of the datasets are summarized in Table~\ref{tab:dataset}.

In addition to real social networks, we modified DIGG~\cite{digg} source code and generated scale-free networks with different network densities and node out-degree distributions.
The purpose which allows us to study the impact of graph structures and network property on the algorithm performance.

\paragraph*{Probability generation models}
Two models that have been used in previous
work~\cite{kempe03, pmia} are: 1) the Weighted Cascade (WC) model where $p(u,v) =
1/d(v)$ where $d(v)$ is the in-degree of $v$ and 2) the Trivalency (TV) model where
$p(u,v)$ is assigned a small value for any $(u,v) \in E$. We argue that both
models are not truthful reflections of the probability model in practice. The WC
model assign a very high probability for a connections to nodes with small
number of incoming connections while the TV model assigns a similar
probability to all edges. In the evaluation, we consider two additional models: 1) Random (RA) where
$p(u,v)$ is randomly selected in the range [0.001, 0.2]. RA is useful when no prior information regarding the influence is available; and 2) Power Law (PL) where $p(u,v)$ follows
the power law distribution with the density function $p(x) = \alpha/x^{\beta}$, with $x$ be the propagation probability between two random edges $p(u,v)$. Parameters $\alpha = 0.05$ and $\beta = 0.9$ are selected so that $p(u,v)$ has the mean value 0.1 in the range [0.001, 0.2].

\subsection{Real Social Networks}

\paragraph*{Unit-cost version of BIM}

\begin{figure*}[t]
\hspace{-0.3in}
\begin{tabular}{cccc}
\includegraphics[width=1.75in]{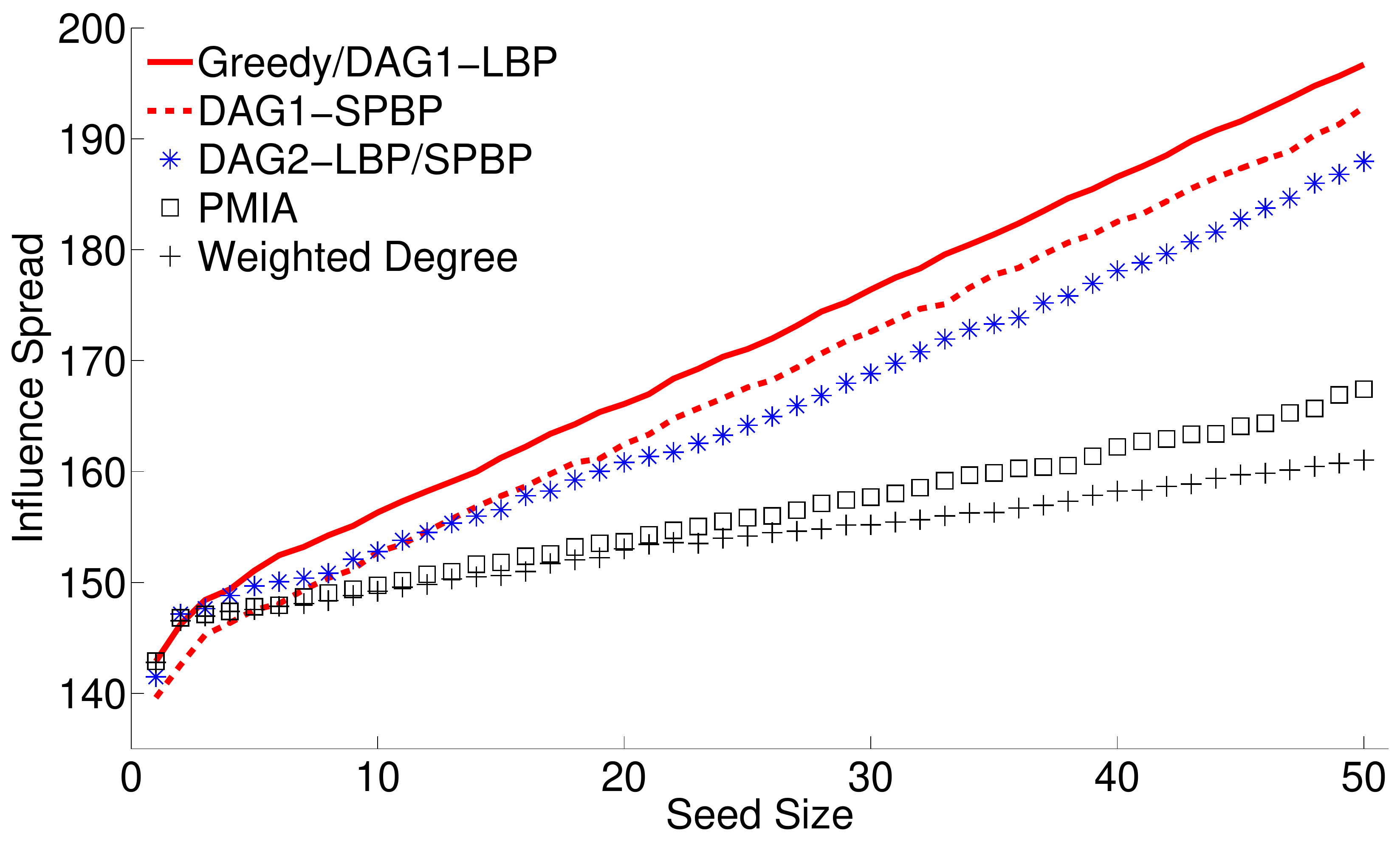} &
\includegraphics[width=1.75in]{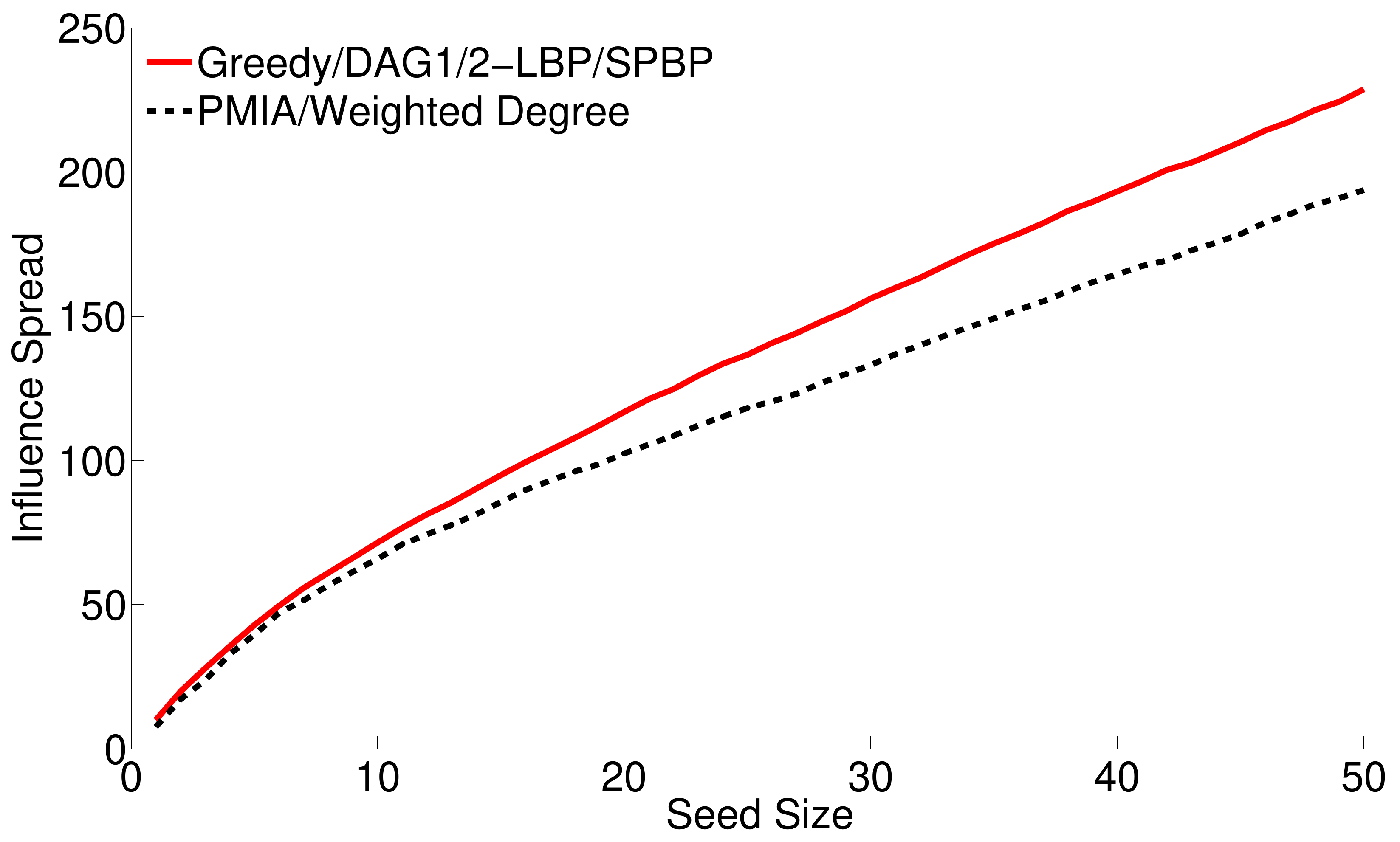} &
\includegraphics[width=1.75in]{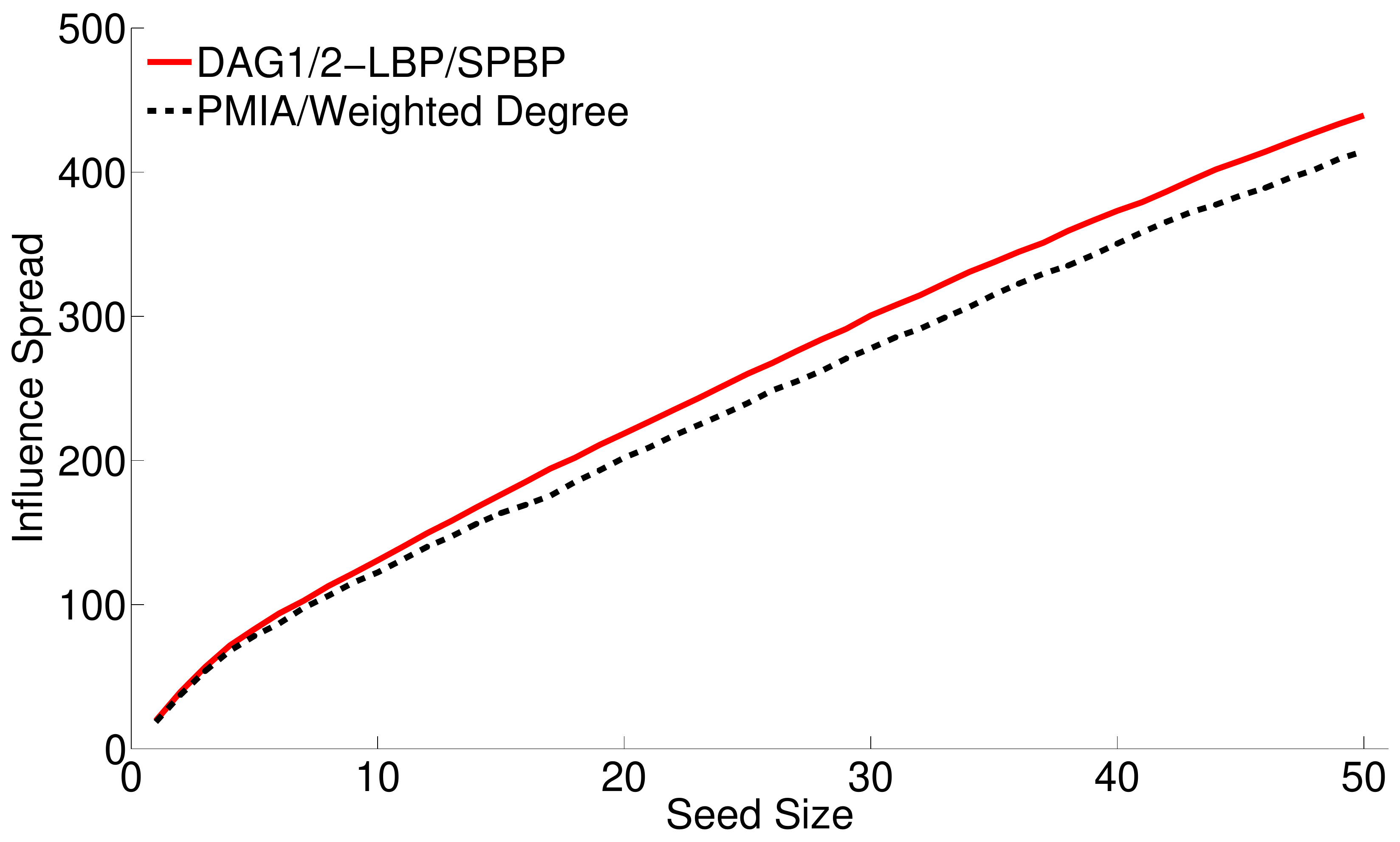} &
\includegraphics[width=1.75in]{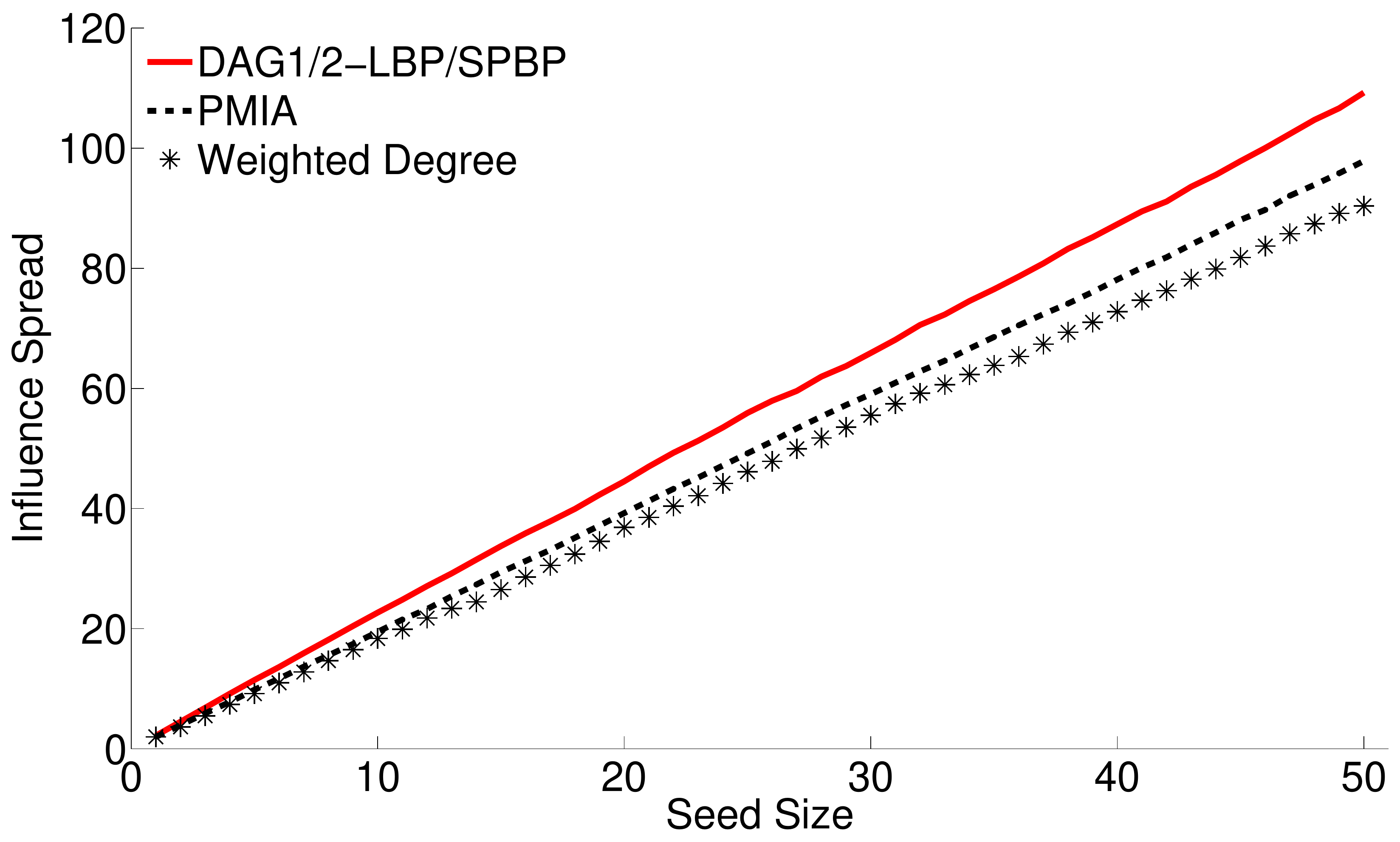} \\
(a) {\it Email} &
(b) {\it p2p-Gnutella} &
(c) {\it soc-Slashdot} &
(d) {\it Amazon}
\end{tabular}
\caption{Influence spread with node unit-cost on 4 datasets. DAG 1 results are in red curves, DAG 2 are in blue curves, and other methods are in black curves.}
\label{fig:influence_spread}
\end{figure*}

\begin{figure*}[t]
\hspace{-0.3in}
\begin{tabular}{cccc}
\includegraphics[width=1.75in]{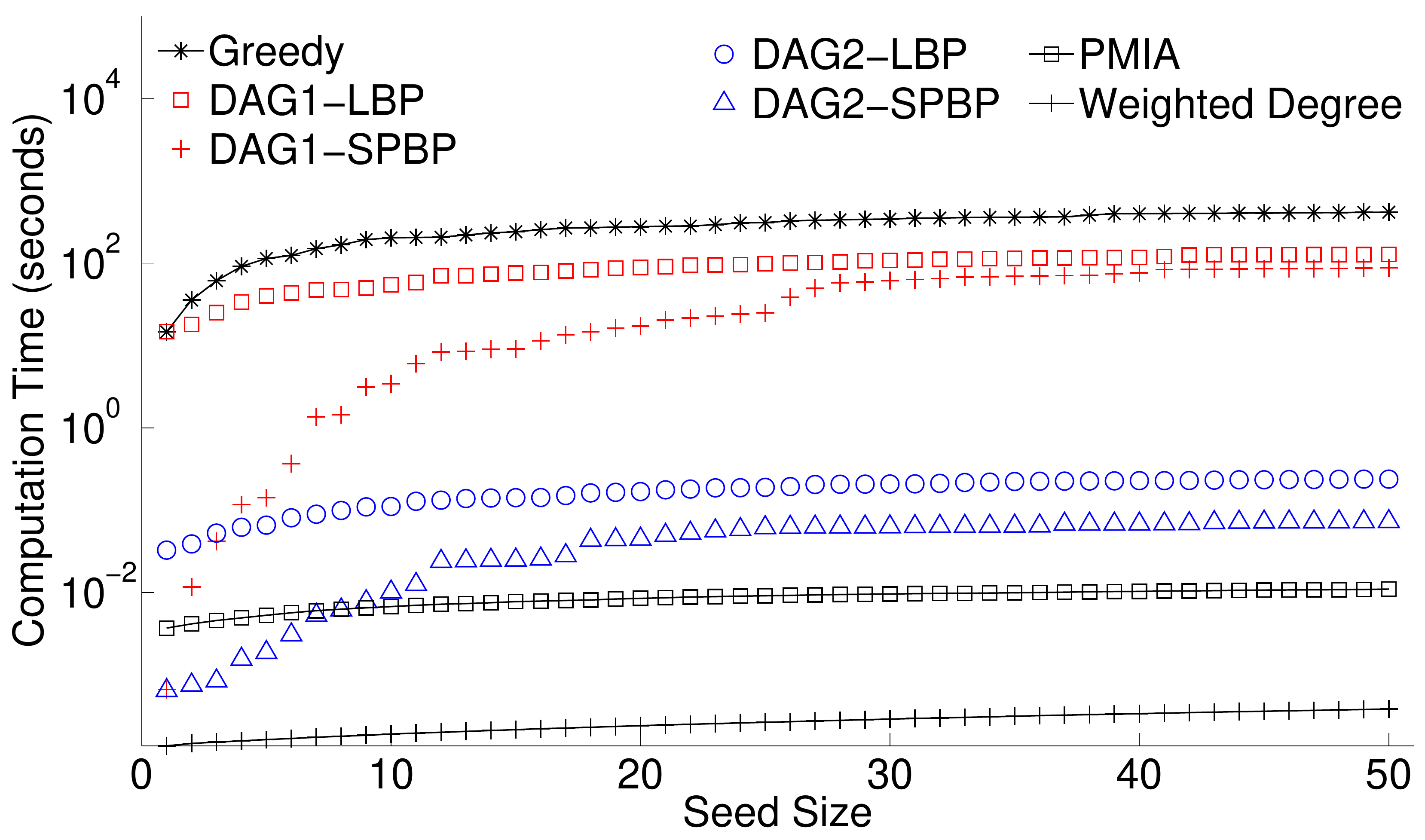} &
\includegraphics[width=1.75in]{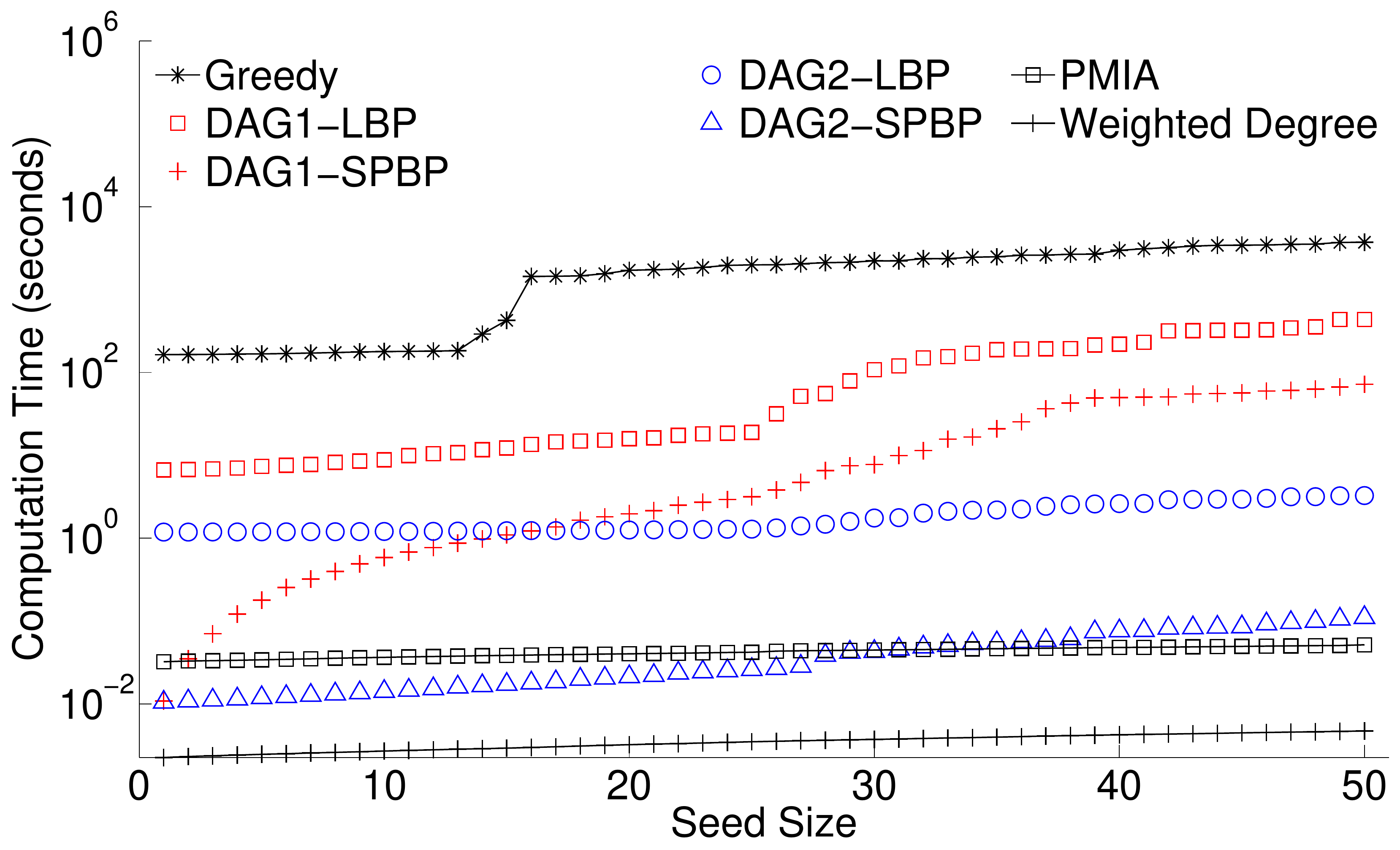} &
\includegraphics[width=1.75in]{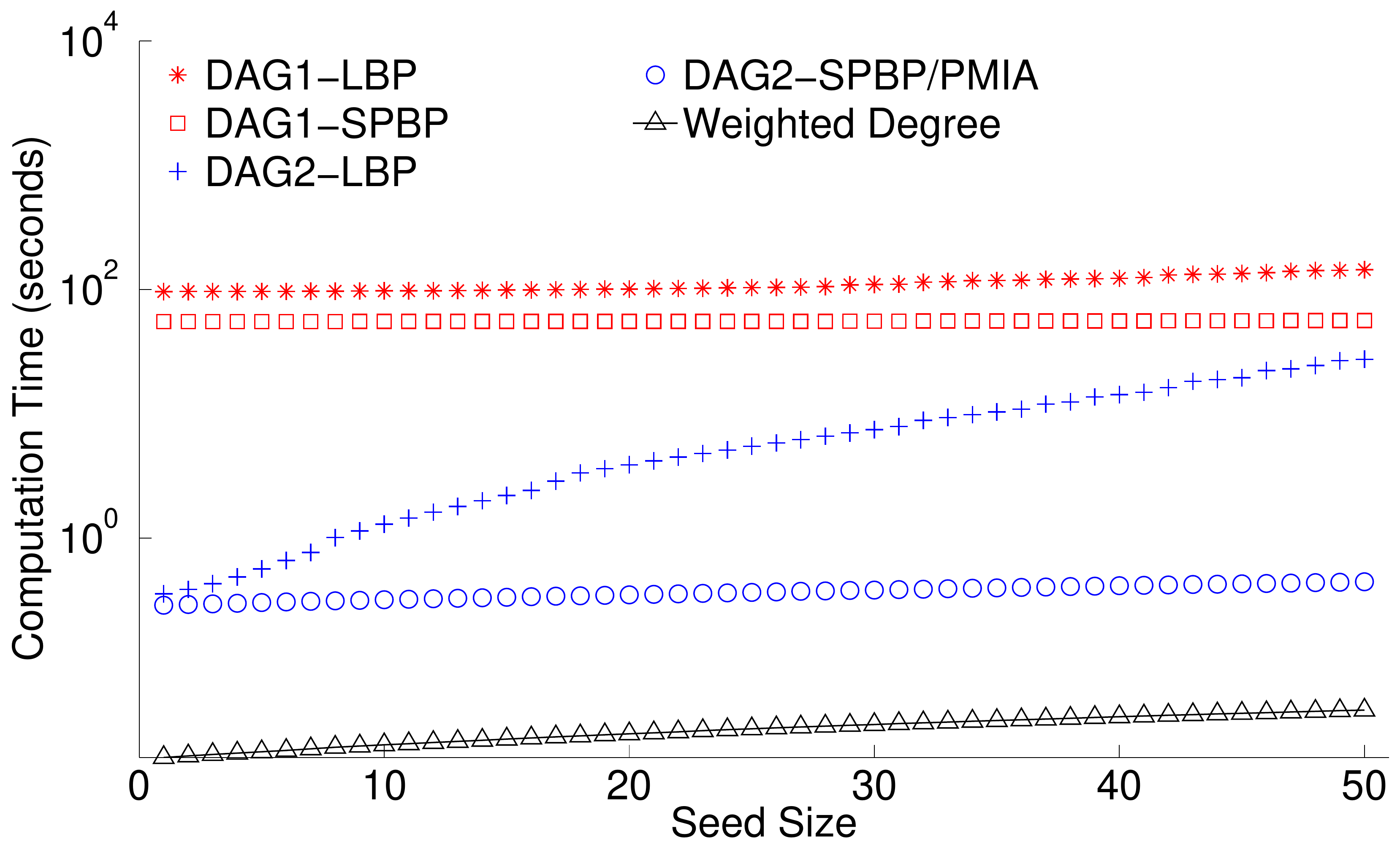} &
\includegraphics[width=1.75in]{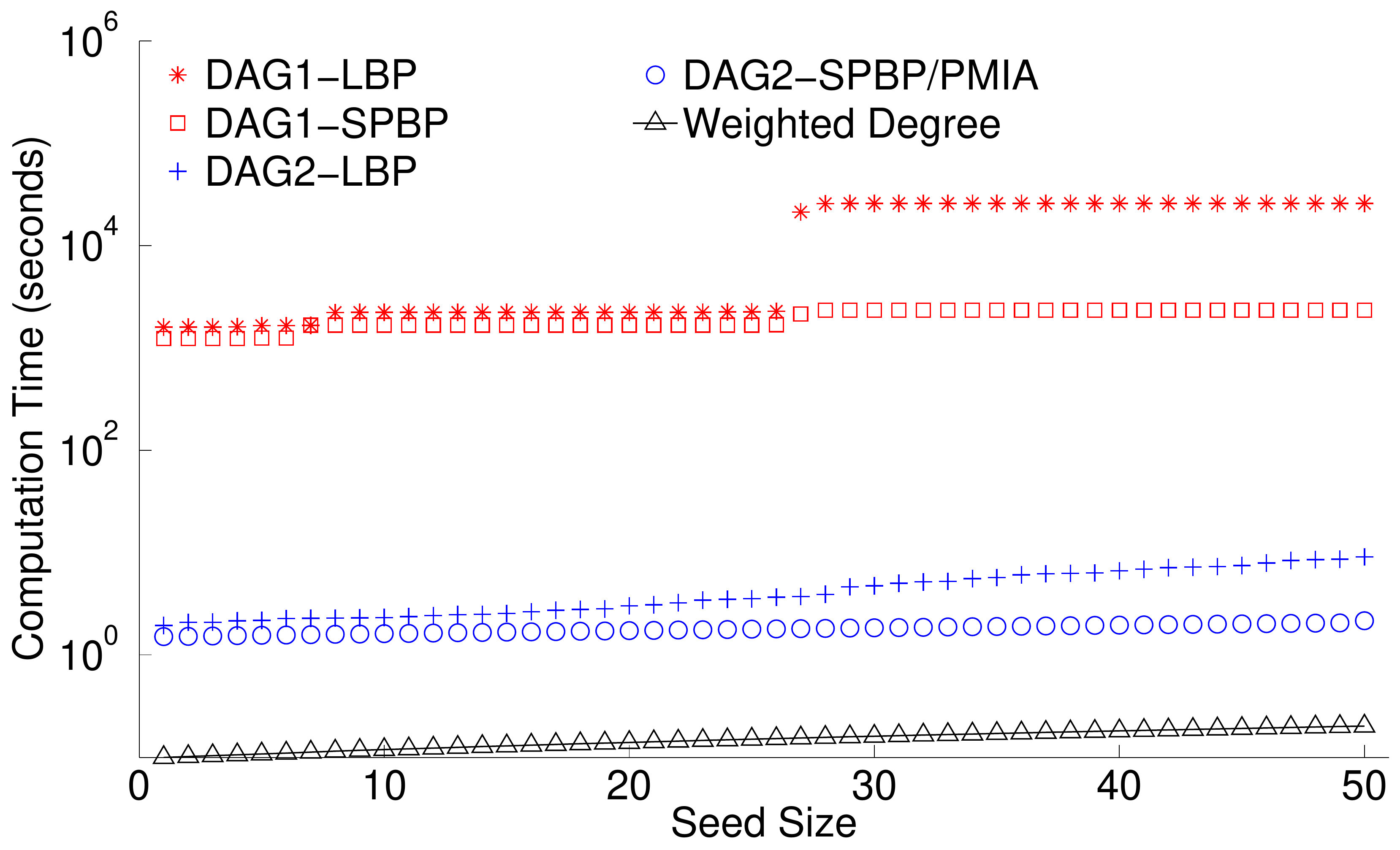} \\
(a) {\it Email} &
(b) {\it p2p-Gnutella} &
(c) {\it soc-Slashdot} &
(d) {\it Amazon}
\end{tabular}
\caption{Computation time with node unit-cost on 4 datasets.}
\label{fig:computation_time}
\end{figure*}

BIM with unit-cost is the traditional IM problem where the seed set size $k$ is fixed. In this experiment, we run 7 algorithms: Greedy, PMIA, Weighted Degree, and the 4 proposed methods on 4 datasets presented in Table II. $k$ varies from 1 to 50, and we adopt the RA probability generation model.

Figure~\ref{fig:influence_spread} shows the influence spread generated by the
best seed sets in different algorithms as the seed size changes. Since Greedy
does not scale with large datasets, we only run Greedy on {\it Email} and {\it
p2p-Gnutella}. The influence spread from the seed
set selected by each algorithm is determined by 10,000 rounds of Monte Carlo
simulations on the original graphs.

In Figure~\ref{fig:influence_spread}(a), the performance of DAG1--LBP and
Greedy (known to be within a constant ratio of the optimal) are not
distinguishable (and thus are represented in one curve). The influence spread
of DAG1--SPBP and DAG2--LBP/SPBP are shortly behind, all
outperforming PMIA and Weighted Degree. We observe on {\it Email} dataset (a small but dense
network) that both the structure of
the DAG (DAG 1 vs. DAG 2) as well as the BP algorithm used (LBP
vs. SPBP) will affect performance of the proposed methods. In
contrast, as shown in Figure~\ref{fig:influence_spread}(b) -- (d), the
influence spreads of the four approaches DAG1/2--LBP/SPBP are identical for
sparser networks, and is the same as Greedy in {\it p2p-Gnutella} dataset.

In terms of running time,  Weighted Degree is the fastest. Among the four
proposed approaches, DAG2--SPBP is the fastest, next are DAG2--LBP, DAG1--SPBP,
and finally DAG1--LBP. DAG2--SPBP and PMIA have comparable order in running time
with DAG2--SPBP being 30-40\% slower than PMIA in most cases. Again, this may be
primarily attributed to the lack of code optimization in our proposed methods.

Interestingly, influence spread on {\it Amazon} grows linearly with the seed size.
Our result matches with that in~\cite{pmia}. This can be explained by the sheer
scale of the network, and thus the small number of selected seeds are likely to
have non-overlapping influence regions.

\paragraph*{General cost version of BIM}

\begin{figure*}[t]
\hspace{-0.3in}
\begin{tabular}{cccc}
\includegraphics[width=1.75in]{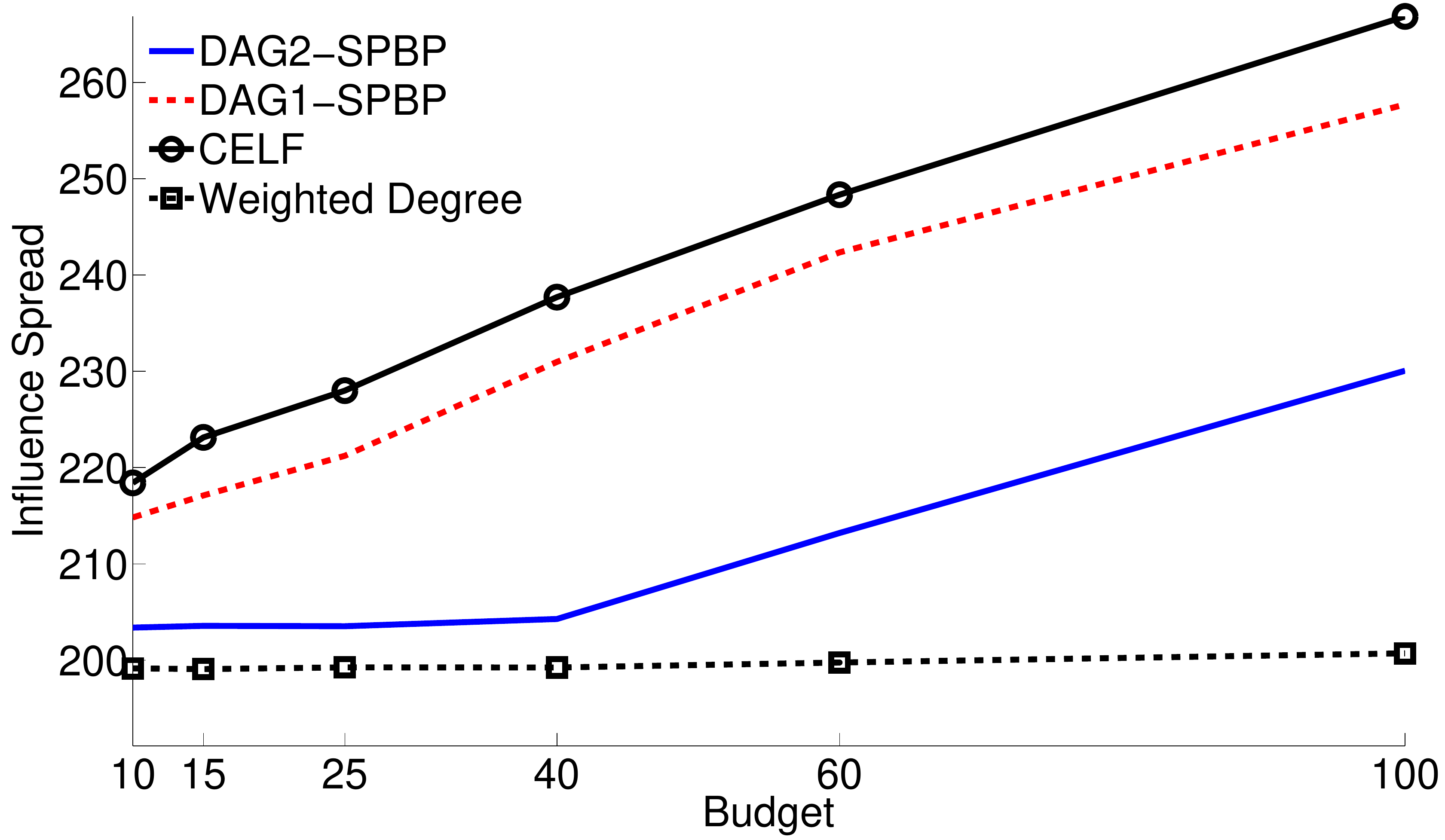} &
\includegraphics[width=1.75in]{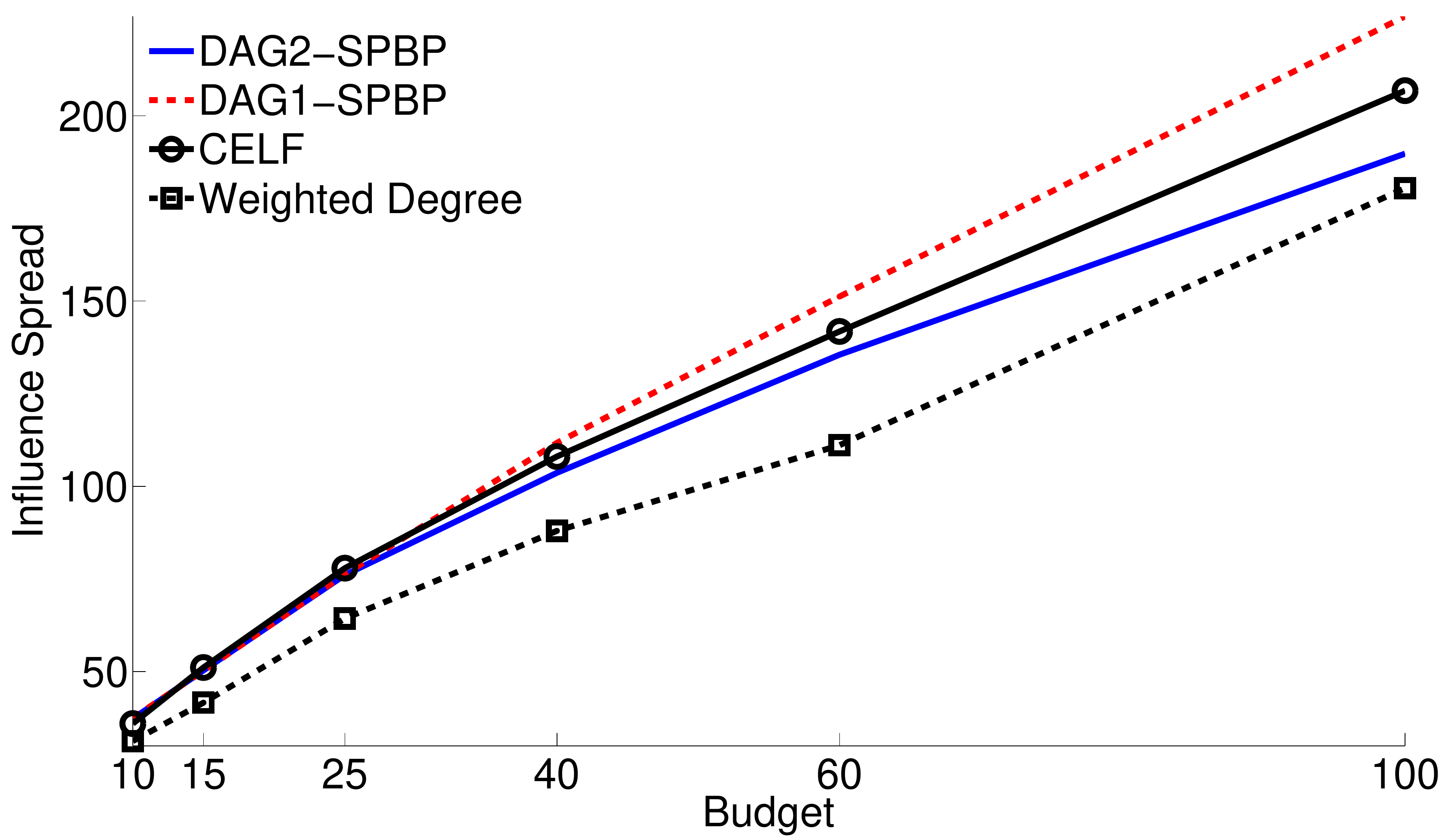} &
\includegraphics[width=1.75in]{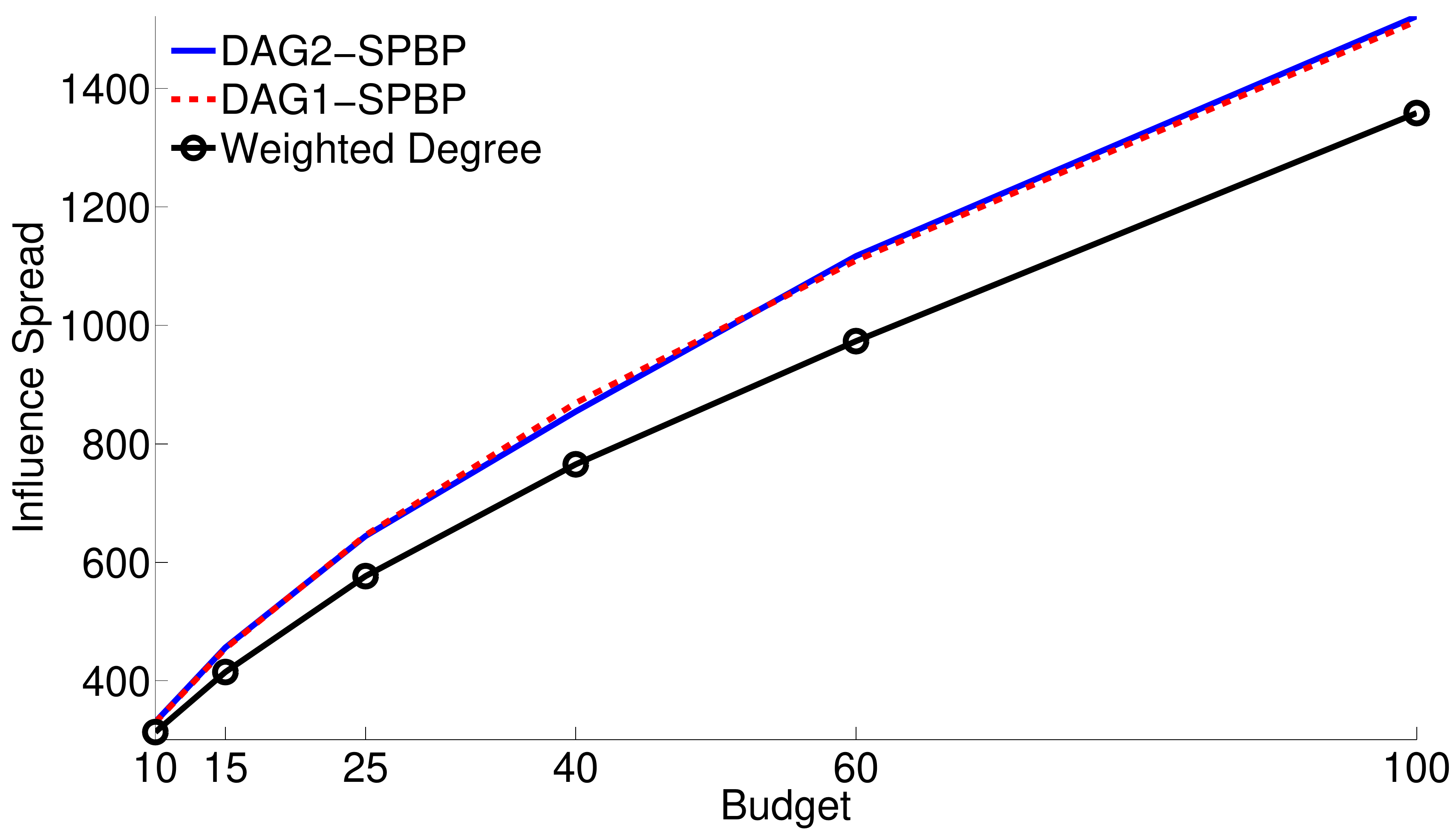} &
\includegraphics[width=1.75in]{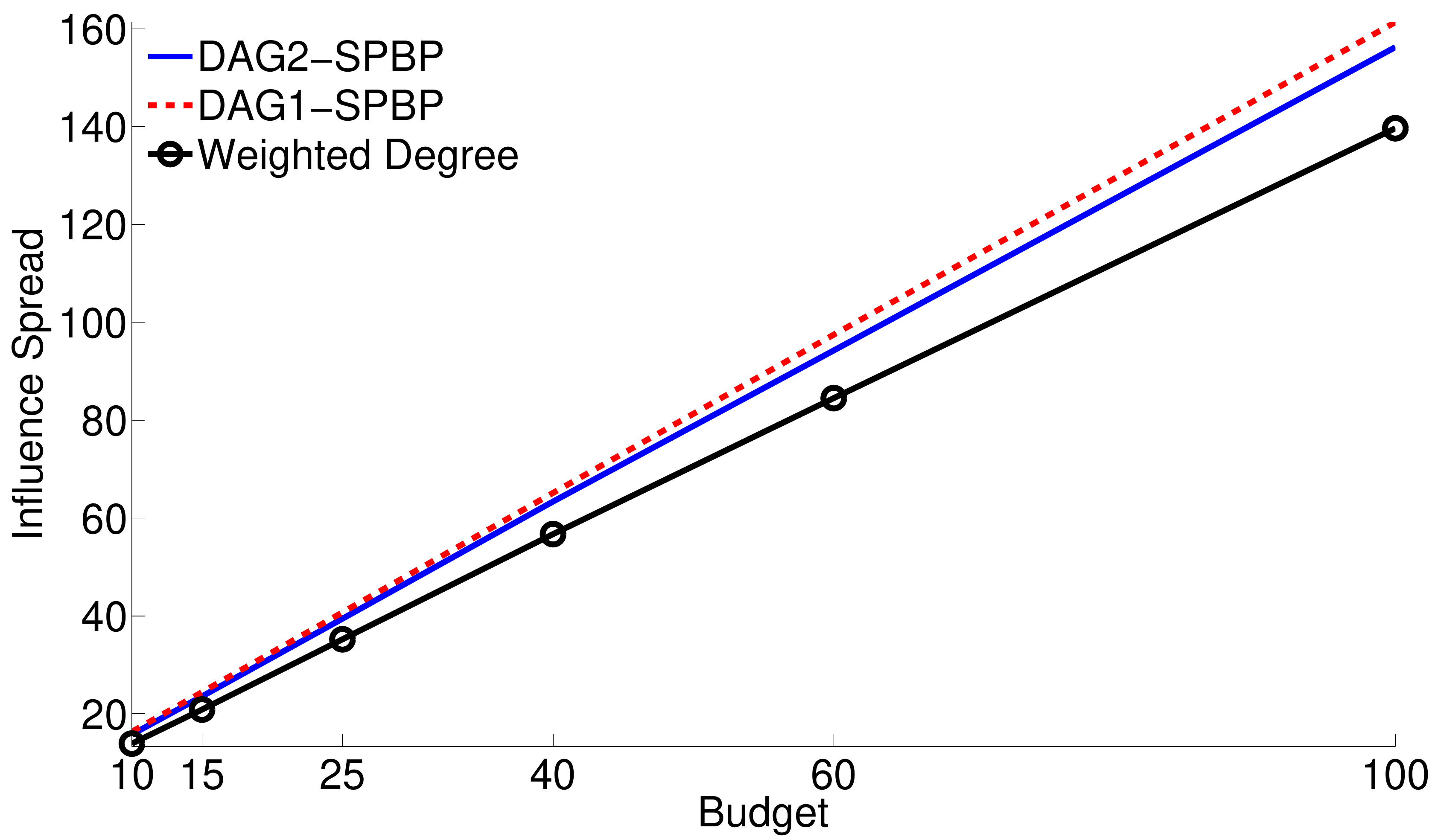} \\
(a) {\it Email} &
(b) {\it p2p-Gnutella} &
(c) {\it soc-Slashdot} &
(d) {\it Amazon}
\end{tabular}
\caption{Influence spread with random node costs on 4 datasets.}
\label{fig:BIM_influence_spread}
\end{figure*}

\begin{figure*}[t]
\hspace{-0.3in}
\begin{tabular}{cccc}
\includegraphics[width=1.75in]{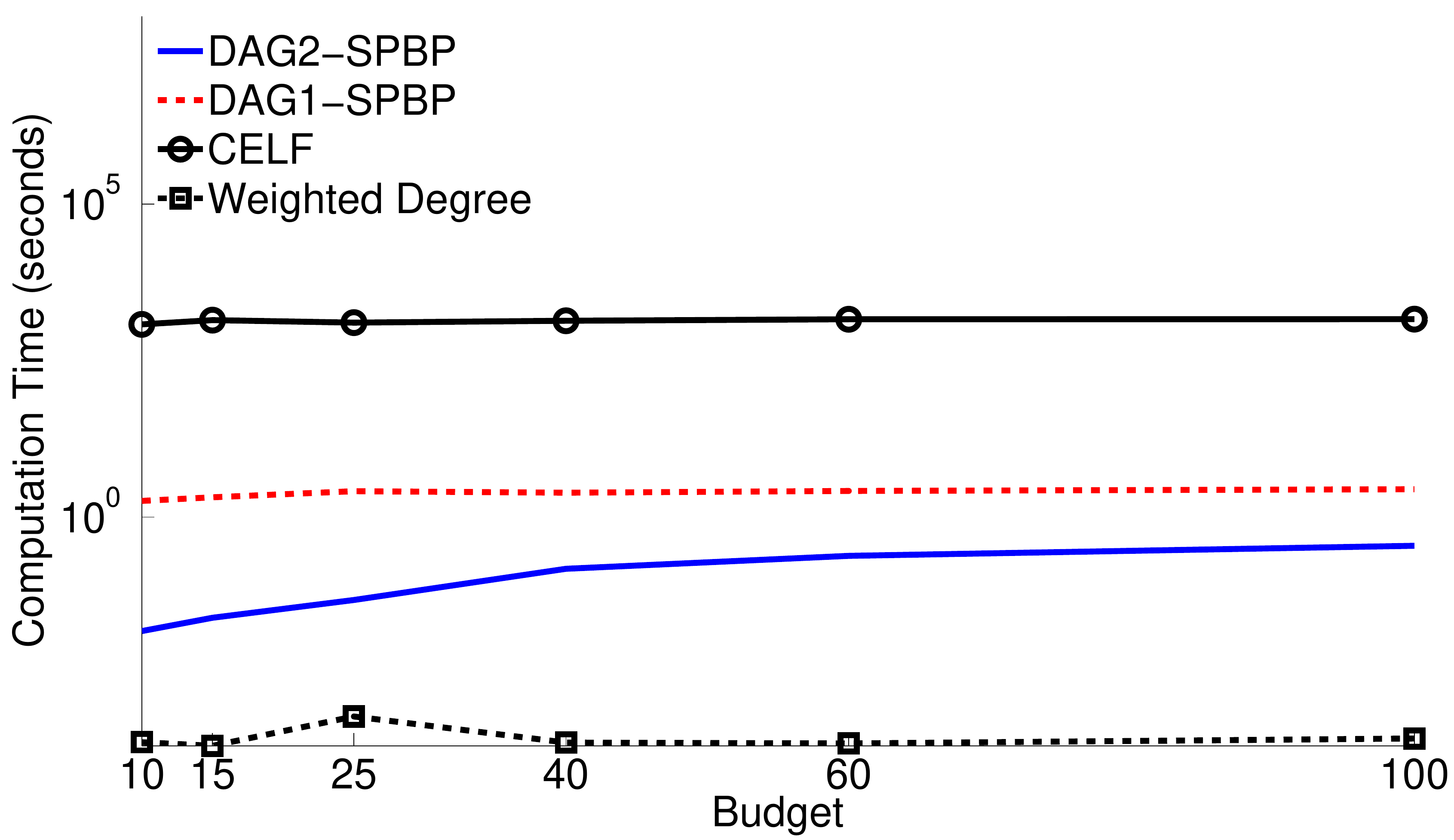} &
\includegraphics[width=1.75in]{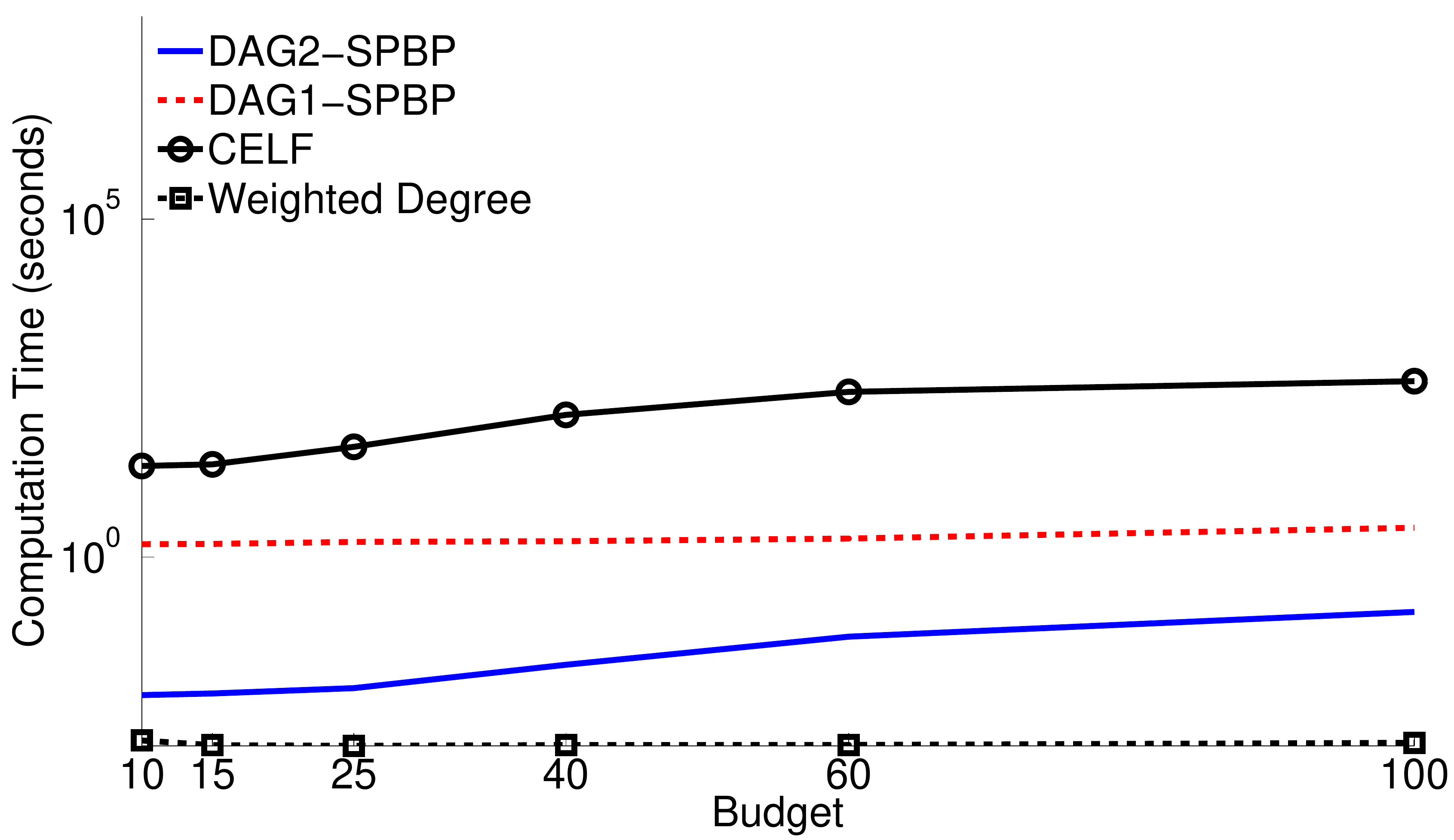} &
\includegraphics[width=1.75in]{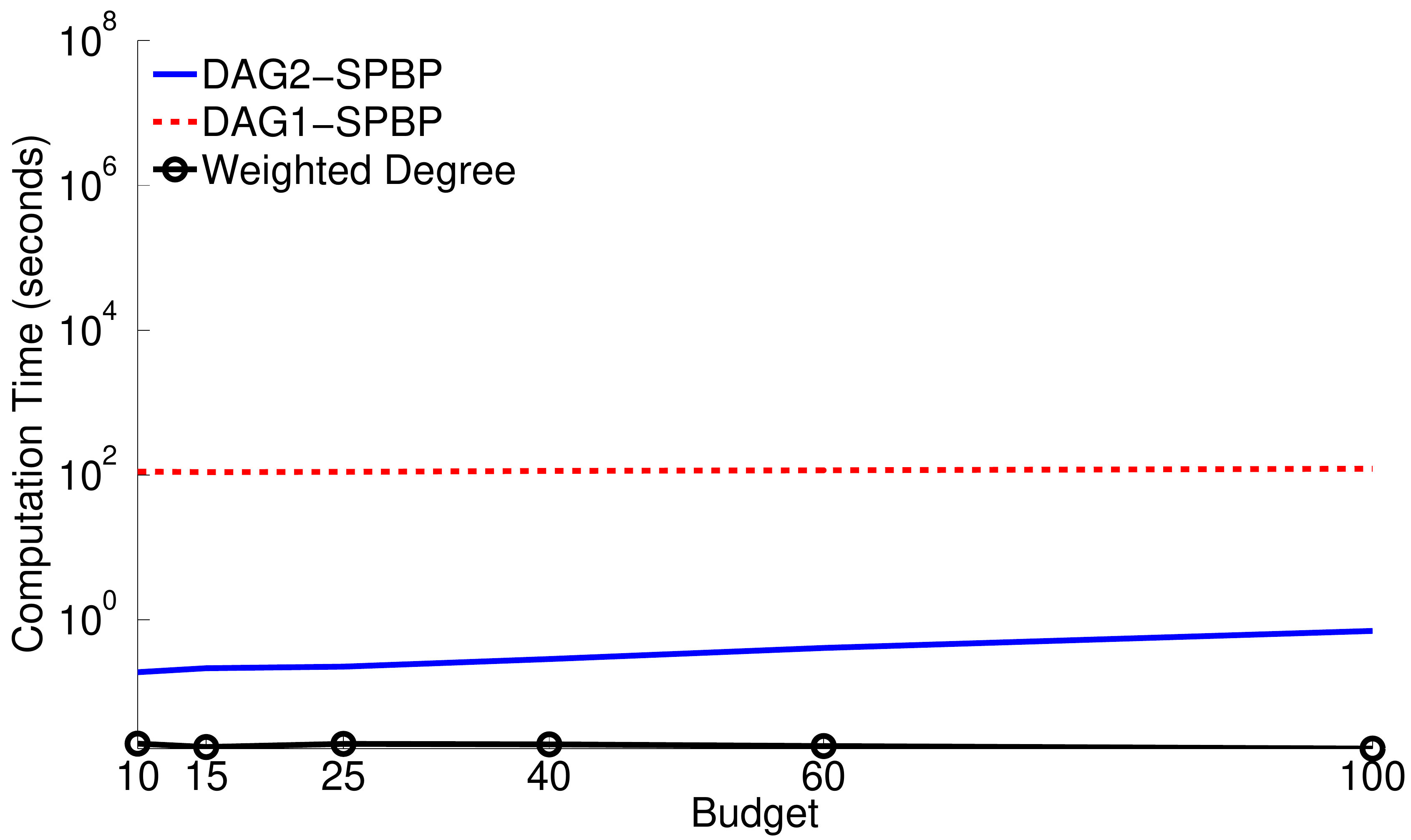} &
\includegraphics[width=1.75in]{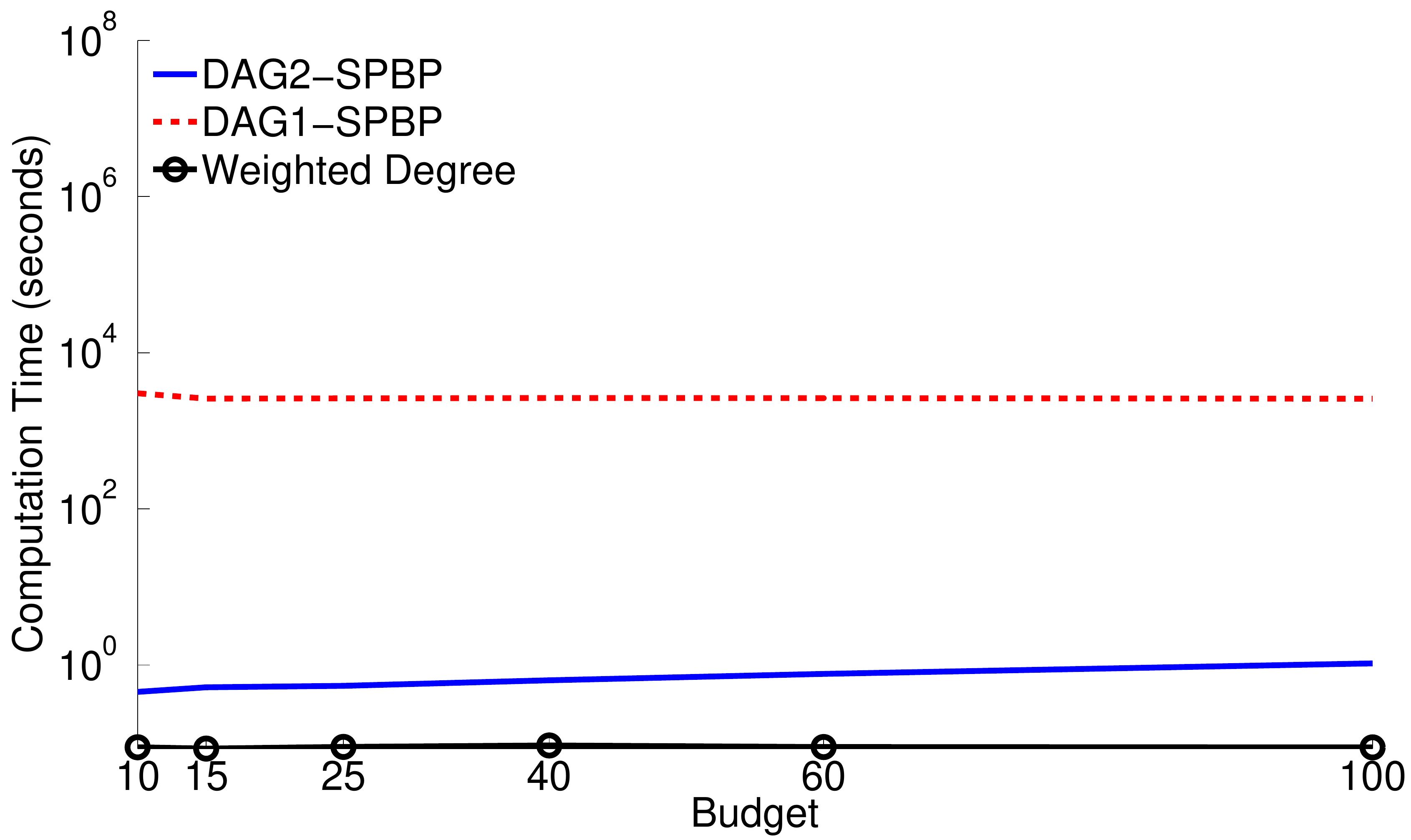} \\
(a) {\it Email} &
(b) {\it p2p-Gnutella} &
(c) {\it soc-Slashdot} &
(d) {\it Amazon}
\end{tabular}
\caption{Computation time with random node costs on 4 datasets.}
\label{fig:BIM_computation_time}
\end{figure*}

In this set of experiments, we compare only 4 algorithms: Greedy/CELF, Weighted Degree, and DAG1/DAG2--SPBP on 4 datasets presented in Table II.
We also omit the two methods that use LBP (DAG1/DAG2--LBP) from the result since they have comparable performance as the SPBP approaches. The budget $b = \{10, 15, 25, 40, 60, 100\}$, and the RA probability generation model is used. Nodal costs are selected uniformly in $[1.0, 3.0]$.

Results in Figures~\ref{fig:BIM_influence_spread} and~\ref{fig:BIM_computation_time} are similar to that in Figures~\ref{fig:influence_spread} and~\ref{fig:computation_time}. In most cases, DAG1 has better performance compared to DAG2. Notably, DAG1--SPBP outperforms Greedy/CELF on {\it p2p-Gnutella} dataset. Figure~\ref{fig:BIM_computation_time} shows that the proposed methods are several orders of magnitude faster than Greedy/CELF. Weighted Degree while being the fastest algorithm, does not perform nearly as well as the others on a dense graph ({\it Email}).

\paragraph*{Comparison of Influence Spread on Two DAG Models}

\begin{figure*}[t]
\hspace{-0.3in}
\begin{center}
\begin{tabular}{ c c }
\includegraphics[width=3.2in]{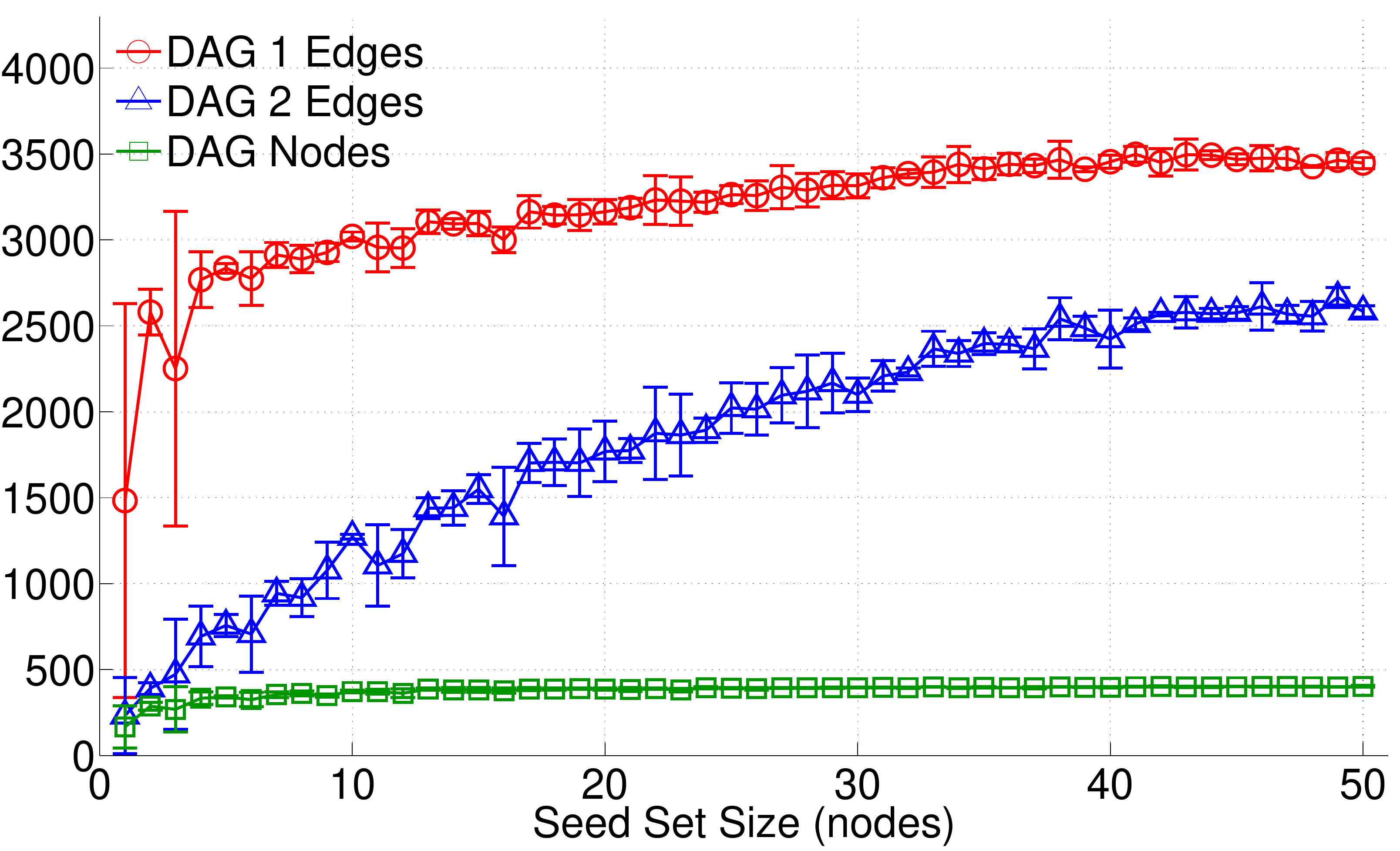}\hspace{0.2in} & \hspace{0.2in}
\includegraphics[width=3.2in]{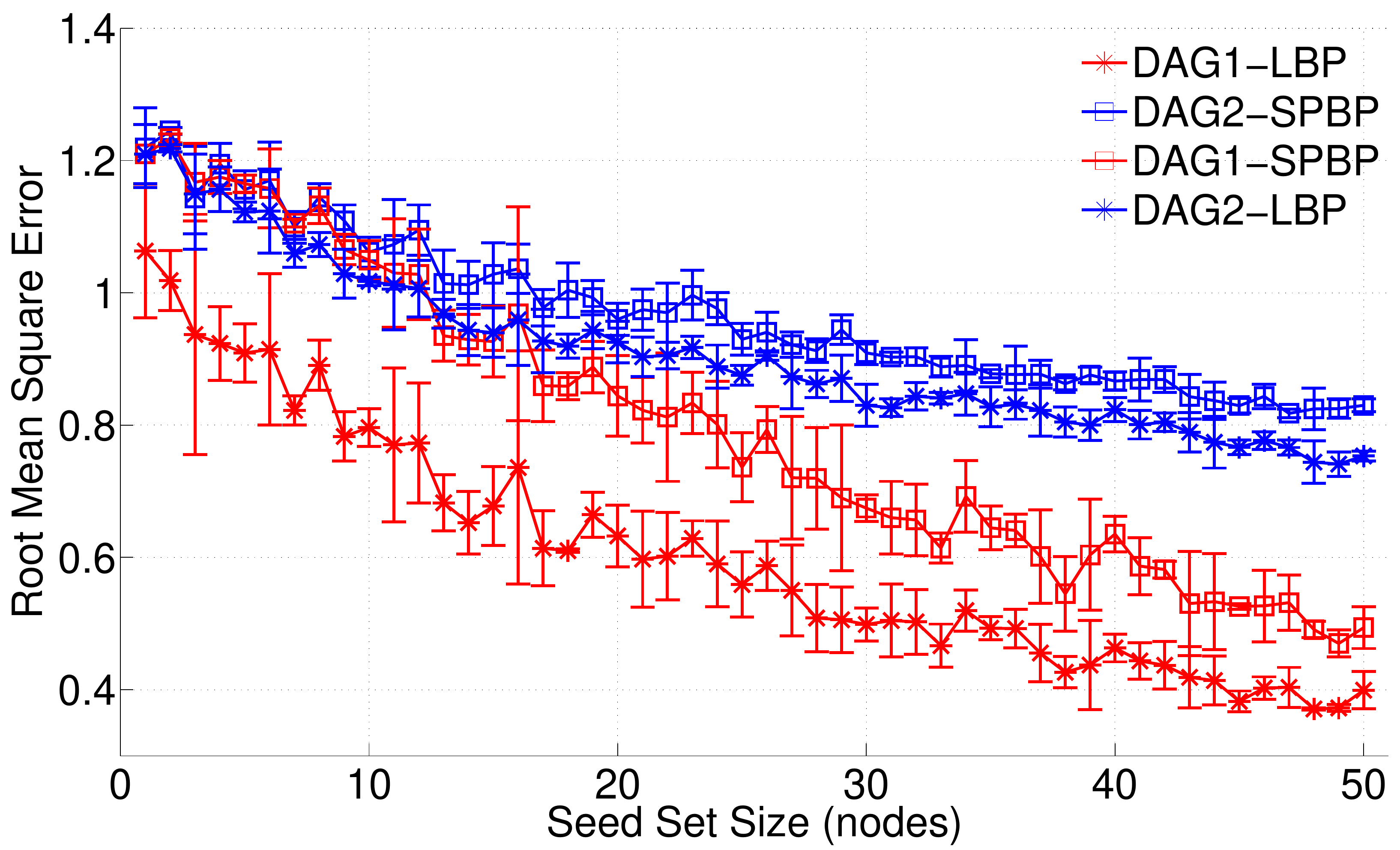}\\
(a) Number of nodes and edges in DAG &
(b) RMSE
\end{tabular}
\caption{Size of DAGs and RMSE of activation probabilities. Results are averages of 50 runs
with different seed selections and symmetric error bars indicate standard
deviations.}
\label{fig:BP1vsBP2}
\end{center}
\end{figure*}

To understand the behavior of the proposed algorithms, we conduct further
experiments on {\it Email} dataset as it gives the largest performance difference
between the algorithms.

Figure~\ref{fig:BP1vsBP2}(a) gives the number of vertices and edges as the
result of the two DAG models with varying sizes of seed sets. Since both have
the same number of vertices, only one curve is shown. It is clear that
DAG1 is much denser than DAG2 due to the inclusion of more edges. As the seed
set grows, the gap becomes smaller.

We use Root Mean Square Error (RMSE) to compare the activation probabilities on
nodes. RMSE is defined as,
$$
RMSE(p,p') = {\sqrt{\frac{\sum_{\forall v \in V}(p'(v)-p(v))^2}{n}}}/{\frac{\sum_{\forall v \in V}^{n}p(v)}{n}},
$$
where $p'(\cdot)$ is the inferred result from the propose algorithms. The
ground truth $p(\cdot)$ is determined by Monte Carlo simulations. When
$p'(v) = p(v), \forall v \in V$ then $RMSE(p,p') = 0$.

Figure~\ref{fig:BP1vsBP2}(b) shows that
DAG1 has smaller RMSE since it constructs a denser graph. More edges
clearly improves the quality of the seed selection process. In the comparing LBP and
SPBP, LBP is slightly better since SPBP ignores the correlation among node states.
The combination of DAG1 and LBP yields the best inference result, but incurs higher computation complexity.
The results are consistent with those in Figure~\ref{fig:influence_spread}(a).

\subsection{Synthetic Networks}

In this section, we conduct three sets of experiment with 5 methods: CELF, PMIA, Weighted Degree and DAG1/2--SPBP. Synthetically generated networks are used to study the impact of network structures and probability generation models on performance of the algorithms. To isolate the effects of network properties, we only consider the unit cost BIM problem.

\paragraph*{Impact of network density}
Results from Figure~\ref{fig:influence_spread} and~\ref{fig:BIM_influence_spread} indicate that our proposed methods perform best on dense networks ({\it Email} and {\it p2p-Guntella}). To further validate this observation, we generate 4 networks with 20k, 50k, 100k, and 200k edges using DIGG~\cite{digg}. The number of vertices is fixed at 5,000. Seed set size $k = 50$ and probability model is RA. We evaluate the spread ratio of various algorithms, defined as the ratio of the spread attained to that by Greedy/CELF algorithm. From Figure~\ref{fig:variousenvironment}(a), as the network density increases, the performance gap between the proposed algorithms and existing algorithms including CELF increases. CELF relies on many rounds of simulations to determine the spread. For dense networks, more rounds of simulations are needed to produce a spread estimation that is close enough to the {\it ground truth}. As a result, with a fixed number of simulation rounds, CELF has worse performance at high network densities. We also observe that PMIA, which was designed to take advantage of network sparsity; and Weighted Degree, which only uses local node information, do not perform well on densely connected graphs.

\begin{figure*}[t]
\hspace{-0.3in}
\begin{center}
\begin{tabular}{ c c c }
\includegraphics[width=2in]{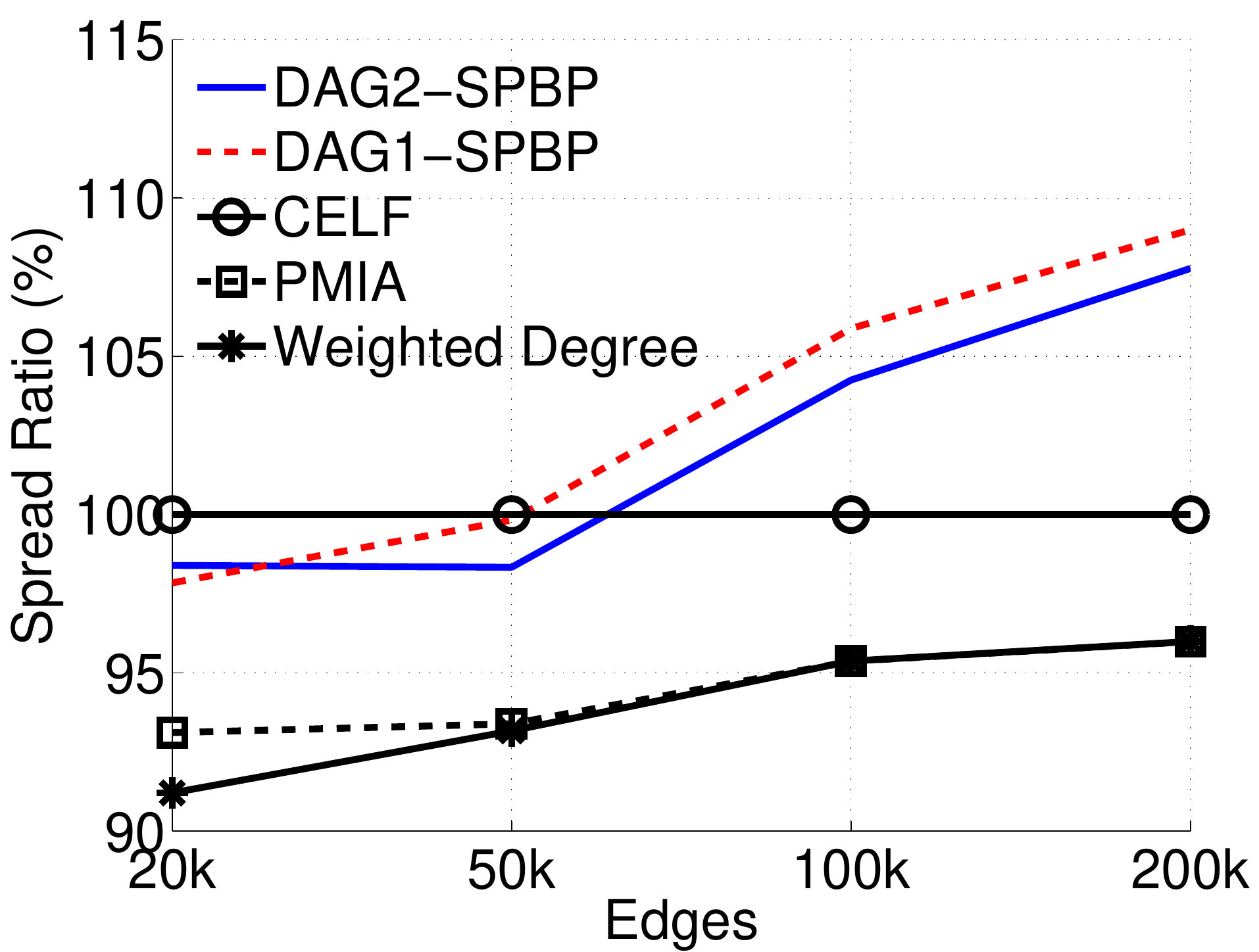} &
\includegraphics[width=2in]{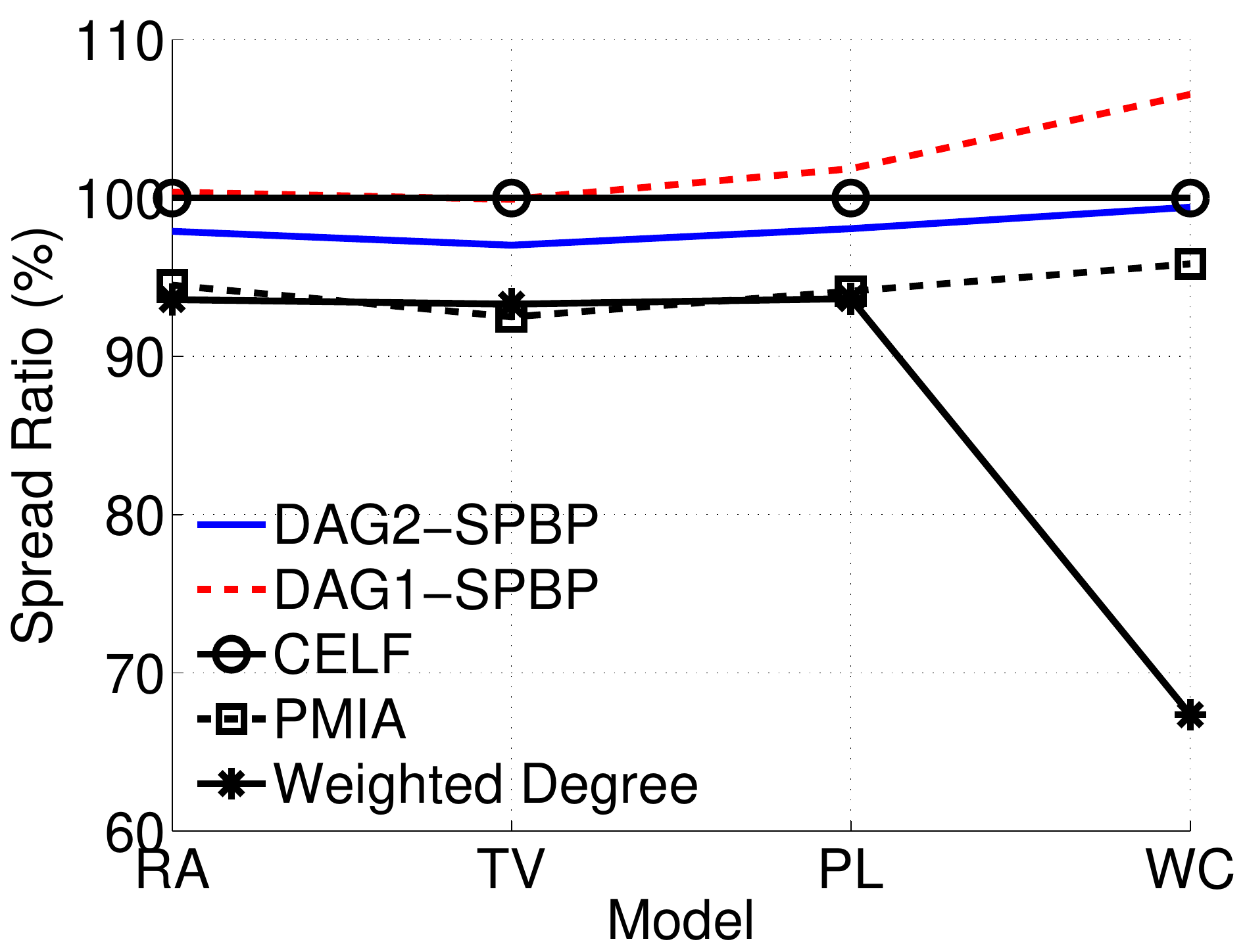} &
\includegraphics[width=2in]{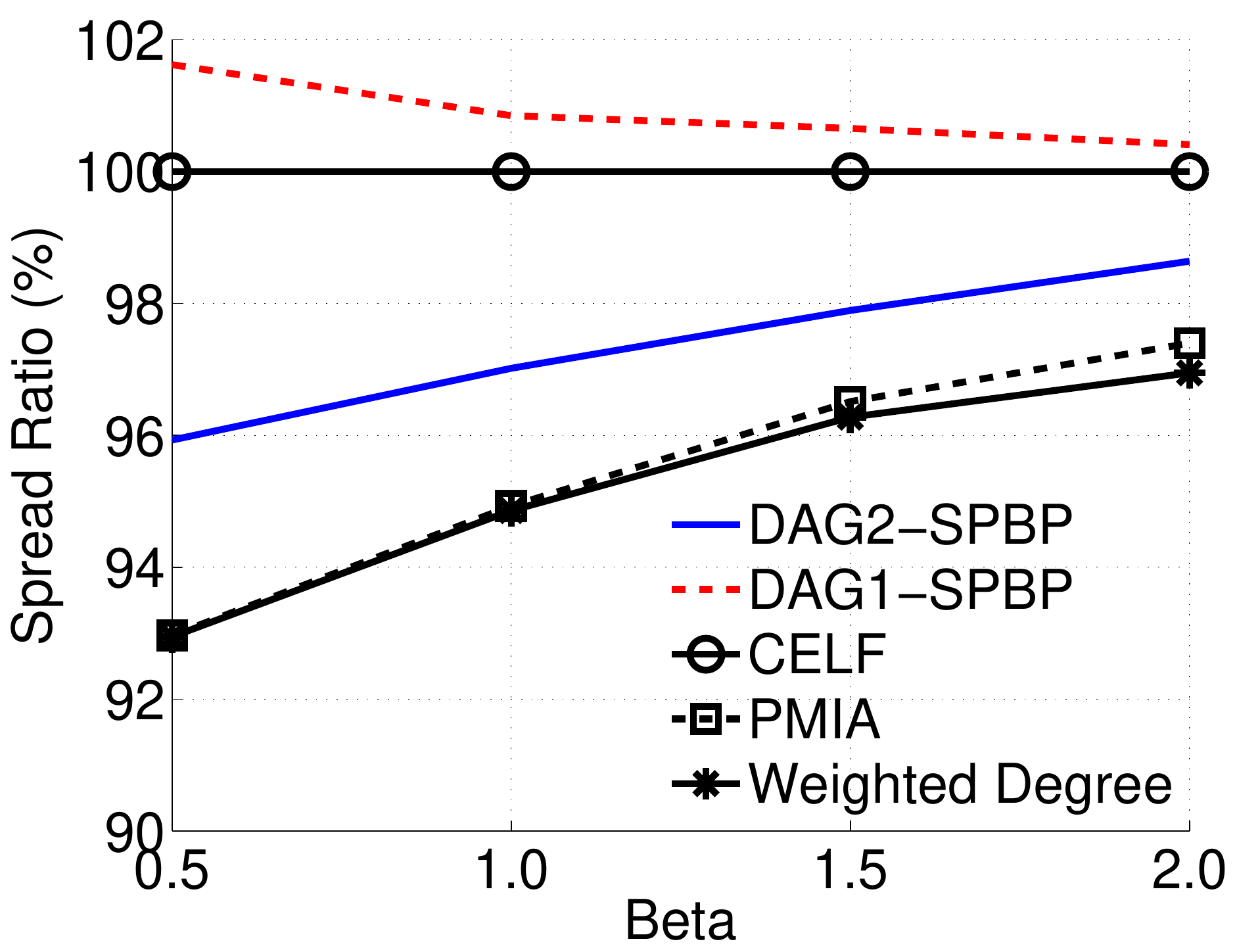}\\
(a) Varying network densities &
(b) Varying probability generation models &
(c) Varying out-degree distributions
\end{tabular}
\caption{Algorithm performance on different network conditions.}
\label{fig:variousenvironment}
\end{center}
\end{figure*}

\paragraph*{Impact of probability generation model}
In this set of experiments, we run 5 algorithms on a synthetic network with 5,000 nodes and 50,000 edges. Each algorithm selects a seed set with size $k = 50$ under 4 propagation probability models: RA, TV, PL and WC. All models give similar performance except Weighted Degree on WC model. Recall that WC generates the propagation probabilities based on the in-degree of nodes, thus strong connections are established between nodes with low in-degree. Weighted Degree can't ``see'' those strong ties beyond the local edges, and therefore, has the worst performance.

\paragraph*{Impact of node out-degree distribution}
It is known that node out-degree in real social networks follows the power-law distribution~\cite{powerlaw}. Let $y$ be the percentage of nodes with degree $x$, then we have $y \sim \alpha / x^{\beta}$.$\alpha$ and $\beta$ can be seen as the intercept and the (negative) slope when degree sequence is plotted on a log-log scale. While varying $\alpha$ only scales the distribution up or down, changing $\beta$ alters the ``shape'' of the distribution. More specifically, a high value of $\beta$ means the node out-degree distribution exhibits larger skew. The network in this case contains few ``hubs'' that are connected to many other nodes. On the other hand, a small $\beta$ means that the distribution is fat-tailed and the max out-degree in the network is not much larger than the average out-degree. We run 5 algorithms to solve the unit-cost BIM problem on 4 generated networks with $\beta = \{0.5, 1.0, 1.5, 2.0\}$. The network size is 5,000 and $\alpha$ is adjusted accordingly such that the total number of edges is roughly 50,000. We see from Figure~\ref{fig:variousenvironment}(c) that the performance gap among the algorithms reduces with larger $\beta$. This is because with a large degree distribution skewness, nodes with high out-degree (hub) will almost certainly be one of the best seed candidates (unless their costs are too high, which is not this case). Simple algorithms such as Weighted Degree can easily identify such hub nodes. On the other hand, when the network is more ``flattened'', more sophisticated algorithms are necessary.

\subsection{Summary}
From the experiments results, Weighted Degree gives the best efficiency
in terms of spread/complexity. However, its performance degrades significantly on dense networks or more heavy tailed power law graphs. The same conclusion is applied to PMIA. Even though being faster than our algorithms, PMIA shows little improvement in term of attainable spread compared to Weighted Degree, except under the WC model.
Our proposed schemes surpass the others in all the experimented
datasets. They also offer more application flexibility: one would apply the best
performed algorithm (DAG1--LBP) on static networks (e.g.: network of connections between
co-workers) to identify the most influential nodes, or apply the fastest algorithm (DAG2--SPBP)
on rapidly changing communities (e.g.: network of connections between people in a social group)
to obtain immediate result.


\section{Conclusion}
\label{sec:conclusion}
While recent researches focus on solving the IM problem, we considered in this paper the BIM problem, which is a generalization of the former one.
The study on real world datasets and synthetic datasets with controllable network parameters provides convincing evidences the proposed algorithms have superior performance. Furthermore, we gain some insights on the choice of algorithms in trading computation complexity with performance given the network structure.

\section{Acknowledgment}
This work is supported in part by the National Science Foundation under grants CNS-1117560 and CNS-0832084.

\bibliographystyle{IEEEtran}

\begin{thebibliography}{10}
\providecommand{\url}[1]{#1}
\csname url@samestyle\endcsname
\providecommand{\newblock}{\relax}
\providecommand{\bibinfo}[2]{#2}
\providecommand{\BIBentrySTDinterwordspacing}{\spaceskip=0pt\relax}
\providecommand{\BIBentryALTinterwordstretchfactor}{4}
\providecommand{\BIBentryALTinterwordspacing}{\spaceskip=\fontdimen2\font plus
\BIBentryALTinterwordstretchfactor\fontdimen3\font minus
  \fontdimen4\font\relax}
\providecommand{\BIBforeignlanguage}[2]{{%
\expandafter\ifx\csname l@#1\endcsname\relax
\typeout{** WARNING: IEEEtran.bst: No hyphenation pattern has been}%
\typeout{** loaded for the language `#1'. Using the pattern for}%
\typeout{** the default language instead.}%
\else
\language=\csname l@#1\endcsname
\fi
#2}}
\providecommand{\BIBdecl}{\relax}
\BIBdecl

\bibitem{Domingos01}
P.~Domingos and M.~Richardson, ``Mining the network value of customers,'' in
  \emph{Proceedings of the seventh ACM SIGKDD KDD '01}.\hskip 1em plus 0.5em
  minus 0.4em\relax New York, NY, USA: ACM, 2001, pp. 57--66.

\bibitem{Richardson02}
M.~Richardson and P.~Domingos, ``Mining knowledge-sharing sites for viral
  marketing,'' in \emph{Proceedings of the eighth ACM SIGKDD KDD '02}.\hskip
  1em plus 0.5em minus 0.4em\relax New York, NY, USA: ACM, 2002, pp. 61--70.

\bibitem{kempe03}
D.~Kempe, J.~Kleinberg, and E.~Tardos, ``Maximizing the spread of influence
  through a social network,'' in \emph{Proceedings of the KDD '03}.\hskip 1em
  plus 0.5em minus 0.4em\relax New York, NY, USA: ACM, 2003, pp. 137--146.

\bibitem{pmia}
W.~Chen, C.~Wang, and Y.~Wang, ``Scalable influence maximization for prevalent
  viral marketing in large-scale social networks,'' in \emph{Proceedings of the
  KDD '10}.\hskip 1em plus 0.5em minus 0.4em\relax New York, NY, USA: ACM,
  2010, pp. 1029--1038.

\bibitem{leskovec07}
J.~Leskovec, A.~Krause, C.~Guestrin, C.~Faloutsos, J.~VanBriesen, and
  N.~Glance, ``Cost-effective outbreak detection in networks,'' in
  \emph{Proceedings of the 13th ACM SIGKDD KDD '07}.\hskip 1em plus 0.5em minus
  0.4em\relax New York, NY, USA: ACM, 2007, pp. 420--429.

\bibitem{GoyalCELF}
A.~Goyal, W.~Lu, and L.~V. Lakshmanan, ``Celf++: optimizing the greedy
  algorithm for influence maximization in social networks,'' in
  \emph{Proceedings of the WWW '11}.\hskip 1em plus 0.5em minus 0.4em\relax New
  York, NY, USA: ACM, 2011, pp. 47--48.

\bibitem{degreediscountic}
W.~Chen, Y.~Wang, and S.~Yang, ``Efficient influence maximization in social
  networks,'' in \emph{Proceedings of the KDD '09}.\hskip 1em plus 0.5em minus
  0.4em\relax New York, NY, USA: ACM, 2009, pp. 199--208.

\bibitem{ldag}
W.~Chen, Y.~Yuan, and L.~Zhang, ``Scalable influence maximization in social
  networks under the linear threshold model,'' in \emph{Proceedings of the 2010
  IEEE ICDM '10}, 2010, pp. 88--97.

\bibitem{Even-Dar}
E.~Even-Dar and A.~Shapira, ``A note on maximizing the spread of influence in
  social networks,'' \emph{Inf. Process. Lett.}, vol. 111, pp. 184--187,
  January 2011.

\bibitem{Sylvester09maximizingdiffusion}
J.~Sylvester, ``Maximizing diffusion on dynamic social networks,''
  \emph{Science}, 2009.

\bibitem{chen09}
W.~Chen, A.~Collins, R.~Cummings, T.~Ke, Z.~Liu, D.~Rincon, X.~Sun, Y.~Wang,
  W.~Wei, and Y.~Yuan, ``Influence maximization in social networks when
  negative opinions may emerge and propagate.'' in \emph{SDM}.\hskip 1em plus
  0.5em minus 0.4em\relax SIAM / Omnipress, 2011, pp. 379--390.

\bibitem{ningchen}
N.~Chen, ``On the approximability of influence in social networks,'' in
  \emph{Proceedings of the nineteenth annual ACM-SIAM symposium on Discrete
  algorithms}, ser. SODA '08.\hskip 1em plus 0.5em minus 0.4em\relax
  Philadelphia, PA, USA: Society for Industrial and Applied Mathematics, 2008,
  pp. 1029--1037.

\bibitem{ackerman}
E.~Ackerman, O.~Ben-Zwi, and G.~Wolfovitz, ``Note: Combinatorial model and
  bounds for target set selection,'' \emph{Theor. Comput. Sci.}, vol. 411, pp.
  4017--4022, October 2010.

\bibitem{goyal}
A.~Goyal, F.~Bonchi, and L.~V.~S. Lakshmanan, ``Approximation analysis of
  influence spread in social networks,'' \emph{CoRR}, vol. abs/1008.2005, 2010.

\bibitem{yaodongni}
Y.~Ni, L.~Xie, and Z.-Q. Liu, ``Minimizing the expected complete influence time
  of a social network,'' \emph{Inf. Sci.}, vol. 180, pp. 2514--2527, July 2010.

\bibitem{tisec07}
D.~Chakrabarti, Y.~Wang, C.~Wang, J.~Leskovec, and C.~Faloutsos, ``Epidemic
  thresholds in real networks,'' \emph{ACM Trans. Inf. Syst. Secur.}, vol.~10,
  no.~4, pp. 1:1--1:26, Jan. 2008.

\bibitem{icdm10}
H.~Tong, B.~Prakash, C.~Tsourakakis, T.~Eliassi-Rad, C.~Faloutsos, and D.~Chau,
  ``On the vulnerability of large graphs,'' in \emph{IEEE 10th ICDM}, dec.
  2010, pp. 1091 --1096.

\bibitem{pkdd10}
B.~A. Prakash, H.~Tong, N.~Valler, M.~Faloutsos, and C.~Faloutsos, ``Virus
  propagation on time-varying networks: theory and immunization algorithms,''
  in \emph{Proceedings of the 2010 ECML PKDD: Part III}.\hskip 1em plus 0.5em
  minus 0.4em\relax Berlin, Heidelberg: Springer-Verlag, 2010, pp. 99--114.

\bibitem{NguyenIMBP}
H.~Nguyen and R.~Zheng, ``Influence spread in large-scale social networks --- a
  belief propagation approach,'' in \emph{Proceedings of the ECML
  PKDD'12}.\hskip 1em plus 0.5em minus 0.4em\relax Berlin, Heidelberg:
  Springer-Verlag, 2012, pp. 515--530.

\bibitem{Khuller199939}
S.~Khuller, A.~Moss, and J.~S. Naor, ``The budgeted maximum coverage problem,''
  \emph{in Infor. Proc. Letters}, vol.~70, no.~1, pp. 39 -- 45, 1999.

\bibitem{chekuri04maximum}
C.~Chekuri and A.~Kumar, ``Maximum coverage problem with group budget
  constraints and applications,'' in \emph{Proceedings of the 7th APPROX},
  2004, pp. 72--83.

\bibitem{understandingbp}
J.~S. Yedidia, W.~T. Freeman, and Y.~Weiss, \emph{{Understanding belief
  propagation and its generalizations}}.\hskip 1em plus 0.5em minus 0.4em\relax
  San Francisco, CA, USA: Morgan Kaufmann Publishers Inc., 2003, pp. 239--269.

\bibitem{pearl}
J.~Pearl, ``{Reverend Bayes on inference engines: A distributed hierarchical
  approach},'' in \emph{Proceedings of the American Association of Artificial
  Intelligence National Conf. on AI}, Pittsburgh, PA, 1982, pp. 133--136.

\bibitem{junctiontree}
S.~L. Lauritzen and D.~J. Spiegelhalter, ``{Local Computations with
  Probabilities on Graphical Structures and Their Application to Expert
  Systems},'' \emph{Journal of the Royal Statistical Society. Series B
  (Methodological)}, vol.~50, no.~2, 1988.

\bibitem{loopy}
K.~P. Murphy, Y.~Weiss, and M.~I. Jordan, ``{Loopy Belief Propagation for
  Approximate Inference: An Empirical Study},'' in \emph{In Proceedings of
  Uncertainty in AI}, 1999, pp. 467--475.

\bibitem{Frey01veryloopy}
B.~J. Frey, R.~Koetter, and N.~Petrovic, \emph{Very loopy belief propagation
  for unwrapping phase images}.\hskip 1em plus 0.5em minus 0.4em\relax MIT
  Press, 2001, vol.~14, p. 737–743.

\bibitem{turbodecoding}
R.~McEliece, D.~MacKay, and J.-F. Cheng, ``Turbo decoding as an instance of
  pearl's ldquo;belief propagation rdquo; algorithm,'' \emph{in JSAC}, vol.~16,
  no.~2, pp. 140 --152, feb 1998.

\bibitem{SP1M}
M.~Kimura and K.~Saito, ``Tractable models for information diffusion in social
  networks,'' in \emph{Knowledge Discovery in Databases: PKDD 2006}, 2006, vol.
  4213, pp. 259--271.

\bibitem{pagerank}
S.~Brin and L.~Page, ``The anatomy of a large-scale hypertextual web search
  engine,'' \emph{Comput. Netw. ISDN Syst.}, vol.~30, pp. 107--117, April 1998.

\bibitem{betweenesscentrality}
L.~Freeman, ``{Centrality in social networks conceptual clarification},''
  \emph{Social Networks}, vol.~1, no.~3, pp. 215--239, 1979.

\bibitem{libdai}
J.~M. Mooij, ``lib{DAI}: A free and open source {C++} library for discrete
  approximate inference in graphical models,'' \emph{Journal of Machine
  Learning Research}, vol.~11, pp. 2169--2173, Aug. 2010.

\bibitem{boost}
\BIBentryALTinterwordspacing
Boost c++ libraries. [Online]. Available: \url{http://http://www.boost.org/}
\BIBentrySTDinterwordspacing

\bibitem{snap}
\BIBentryALTinterwordspacing
J.~Leskovec. Stanford large network dataset collection. [Online]. Available:
  \url{http://snap.stanford.edu/data/}
\BIBentrySTDinterwordspacing

\bibitem{emaildataset}
\BIBentryALTinterwordspacing
I.~Wiki. Social network generation. [Online]. Available:
  \url{http://www.infovis-wiki.net/index.php/Social\_Network\_Generation}
\BIBentrySTDinterwordspacing

\bibitem{digg}
\BIBentryALTinterwordspacing
L.~Dignan. (2006) Dynamic graph generator. [Online]. Available:
  \url{http://digg.cs.tufts.edu/}
\BIBentrySTDinterwordspacing

\bibitem{powerlaw}
A.~Clauset, C.~R. Shalizi, and M.~E.~J. Newman, ``Power-law distributions in
  empirical data,'' \emph{SIAM Rev.}, vol.~51, no.~4, pp. 661--703, Nov. 2009.

\bibitem{Valiant_1979}
\BIBentryALTinterwordspacing
L.~G. Valiant, ``The complexity of enumeration and reliability problems,''
  \emph{SIAM Journal on Computing}, vol.~8, no.~3, pp. 410--421, 1979.
  [Online]. Available: \url{http://link.aip.org/link/?SMJCAT/8/410/1}
\BIBentrySTDinterwordspacing

\bibitem{KrauseTechReport05}
A.~Krause and C.~Guestrin, ``A note on the budgeted maximization of submodular
  functions,'' Carnegie Mellon University - School of Computer Science, Tech.
  Rep. CMU-CALD-05-103, June 2005.

\end{thebibliography}

\appendix
\label{app:sharphard}
\paragraph*{Proof of Theorem 1}\mbox{}
\begin{proof}
The proof is an adaption of the proof in \cite{pmia} and Valiant's original proofs
of the \#P-completeness of the $s$-$t$ connectedness in a direct graph~\cite{Valiant_1979}.
First, we define a few problems that are known or to be proven to be \#P-complete.
\begin{defn} (SAT')
\begin{itemize}
\item[Input:] $F = c_1\wedge c_2\wedge \ldots c_r$, where $c_i = (y_{i1}\vee y_{i2})$ and $y_{ij} \in X$,
\item[Output:] $|\{(\xx,\ttt)| \ttt = (t_1, t_2, \ldots, t_n) \in \{1,2\}^n$; for
$1\le i\le r$, $\xx$ make $y_{i,k}$ true for $k=t_i$.
\end{itemize}
\end{defn}
\begin{defn} (S-SET CONNECTEDNESS on DAG)
\begin{itemize}
\item[Input:] A DAG $\MD = (V,E); s \in V; V'\in V$.
\item[Output:] Number of subgraphs of $\MD$ in which for each $u\in V'$, there is a (directed) path from $s$ to $u$.
\end{itemize}
\end{defn}
\begin{defn} (S-T CONNECTEDNESS on DAG)
\begin{itemize}
\item[Input:] A DAG $\MD = (V,E); s, t\in V$.
\item[Output:] Number of subgraphs of $\MD$ in which there is a directed path from $s$ to $t$.
\end{itemize}
\end{defn}

To prove Theorem 1, we first establish the following lemma.
\begin{lem}
$SAT'\preceq_{p}$ S-T CONNECTEDNESS on DAG.
\label{lem:stc}
\end{lem}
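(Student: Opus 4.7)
The plan is to construct, in polynomial time, from any SAT' instance $F = c_1 \wedge \cdots \wedge c_r$ on variables $X = \{x_1, \ldots, x_n\}$, a DAG $\MD_F$ with designated source $s$ and sink $t$ such that the number of subgraphs of $\MD_F$ containing a directed $s$-$t$ path equals the SAT' count for $F$ (up to a multiplicative factor $2^L$ with $L$ polynomially computable in $|F|$ that can be divided out at the end). This follows the spirit of Valiant's original \#P-completeness proof for $s$-$t$ connectedness, adapted so that each "choice" in SAT' (both the assignment $\xxx$ and the per-clause selection $\ttt$) is encoded by the presence or absence of specific edges of the DAG.

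I would build $\MD_F$ as a serial composition of $n$ variable gadgets $V_1,\ldots,V_n$ followed by $r$ clause gadgets $C_1,\ldots,C_r$, with $s$ prepended to $V_1$ and $t$ appended to $C_r$. Each $V_j$ has two internal parallel branches, one for $x_j = 1$ and one for $x_j = 0$; each $C_i$ has two internal branches, one for $t_i = 1$ and one for $t_i = 2$. To enforce consistency between the selection made inside $C_i$ and the assignment fixed by the $V_j$ gadgets, the branch of $C_i$ corresponding to selection $t_i = k$ would be routed through a set of "consistency edges" that share vertices with $V_{j(i,k)}$, where $j(i,k)$ is the index of the variable appearing in literal $y_{i,k}$, wired so that this branch is traversable in the chosen subgraph if and only if the variable-gadget branch agreeing with $y_{i,k}$ is present.

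First I would formally define each gadget and enumerate, for an arbitrary subgraph, when an $s$-$t$ path exists through it. Next I would partition the edges of $\MD_F$ into "forced" edges (whose inclusion matters for connectivity) and "free" edges (whose inclusion contributes only a factor of $2$ each), so that summing over all subgraphs factors cleanly as $2^L$ times a count over the forced edges. Then I would argue that the forced-edge configurations admitting an $s$-$t$ path are in bijection with the (assignment, selection) pairs counted by SAT'.

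The hard part will be proving this correspondence without overcounting: because subgraph inclusion is independent across edges and because gadgets for distinct clauses share vertices of the same variable gadgets, the connectivity conditions of different clause branches are not statistically independent, and multiple subgraphs can induce the same $s$-$t$ path. I expect the bulk of the proof to be a careful case analysis on each gadget's possible edge subsets, showing that the multiplicity of subgraphs realizing each valid pair is exactly $2^L$ for the same $L$. Once this is established, the reduction concludes by recovering the SAT' count from a single query to an S-T CONNECTEDNESS oracle on $\MD_F$ via a polynomial-time divide-out, yielding $SAT' \preceq_p$ S-T CONNECTEDNESS on DAG as required for Lemma~\ref{lem:stc}.
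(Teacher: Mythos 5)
There is a genuine gap, and it sits exactly at the step you yourself flag as ``the hard part.'' Your reduction hinges on the claim that the number of subgraphs of $\MD_F$ admitting a directed $s$--$t$ path factors as $2^L$ times a set of forced-edge configurations in bijection with the $(\xx,\ttt)$ pairs counted by SAT$'$. For a monotone property like $s$--$t$ connectivity over parallel alternatives, no such clean product/bijection exists: a subgraph that contains \emph{both} branches of a variable gadget (or both branches of a clause gadget) still admits an $s$--$t$ path, yet corresponds to no single assignment, and conversely each valid certificate is realized by an upward-closed family of edge sets whose size is governed by inclusion--exclusion over overlapping path supports, not by a uniform multiplicity $2^L$. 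Already for two disjoint parallel $s$--$t$ paths with edge sets $A$ and $B$ the count is $2^{|B|}+2^{|A|}-1$ rather than $2^L\cdot 2$, so the single-query ``divide out $2^L$'' recovery step cannot work as described. This is precisely the obstacle that Valiant's argument is built to overcome (via the particular form of SAT$'$ and his counting/interpolation machinery), and it is the entire content of the lemma; a construction plan without it is not yet a proof. Secondary, but also unaddressed: routing clause branches ``through vertices shared with'' earlier variable gadgets must be checked both for acyclicity and for the absence of unintended $s$--$t$ shortcuts created by the shared vertices.

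For comparison, the paper does not use serial variable-then-clause gadgets at all. It takes Valiant's original reduction graph --- a ladder of literal vertices $x_i,\bar{x}_i$ with all four forward edges between consecutive levels, source $s$ feeding level $1$, and clause vertices $c_1,\dots,c_{r+1}$ hung off the literal occurrences --- and observes that orienting all edges from lower to higher index (and from literals to clauses) makes it a DAG, so that Valiant's counting argument applies verbatim; the acyclicity observation is the only new content, and the delicate counting is inherited rather than reinvented. If you want to keep your gadget-based route, you would need to either (i) prove the exact multiplicity claim by the case analysis you defer, which I expect to fail for the reason above, or (ii) import Valiant's actual technique, e.g.\ multiple oracle calls with edges replaced by bundles/paths of varying lengths followed by polynomial interpolation, to extract the SAT$'$ count from the inclusion--exclusion sum.
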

\begin{proof}
Given $F$ construct a DAG $\MD = (V, E_1\cup E_2)$ where
$V = \{c_1, c_2, \ldots, c_{r+1}, x_1, \ldots, x_n, \bar{x}_1, \ldots,\bar{x}_n, s\},$
$E_1 = \{(x_i, c_j) | x_i \mbox{ appears in clause $c_j$ in $F\}\bigcup \\  \{(x_n,
c_{r+1}), (\bar{x}_n, c_{r+1})\}$},$ and
$E_2 = \{(x_i, x_{i+1}), (\bar{x}_i, x_{i+1}), (\bar{x}_i, \bar{x}_{i+1}), (x_i, \bar{x}_{i+1})|1\le i\le n\}\bigcup \{(s,x_1),(s,\bar{x}_1)\}.$
The direction of each edge follows the order of the pairs. $\MD$ is a DAG as edges only go from $x$'s of smaller index to
larger ones, and from $x$'s to $c$'s. Note the $\MD$ is multi-connected. The rest
of the proof follows that in \cite{Valiant_1979}.
\end{proof}
Theorem 1 can then be proved using the same argument as in
\cite{pmia} with the exception that the reduction is from the S-T CONNECTEDNESS on
DAGs due to Lemma~\ref{lem:stc}.
\end{proof}

\paragraph*{Proof of Proposition 1}\mbox{}
\begin{proof}
In both algorithms, a node $v$ is not included in the DAG if and only if $r(v)
> \theta$. Thus, $V_{\MD_1} = V_{\MD_2}$.

To show $E_{\MD_2} \subseteq E_{\MD_1}$, it suffices to show that $\forall (u,v) \in E_{\MD_2}$, $(u,v) \in E_{\MD_1}$. Since $(u,v) \in E_{\MD_2}$, $(u,v) \in E$ and $r(u) \le r(v)$. Therefore, according to Algorithm 2, $(u,v) \in E_{\MD_2}$. Clearly, the converse is not true as some edges in $E_{\MD_1}$ may not be part of the $MIOA$ from any seed node.
\end{proof}

\paragraph*{Proof of Proposition 2}\mbox{}
\begin{proof}
It is easy to see that by limiting the spread from $u$ in $MIOA(\MG,u,\theta)$,
then $p(w), \forall w \in MIOA(\MG,v,\theta)$ will not be affected by the
inclusion of $u$ in the seed set.
\end{proof}

\paragraph*{Proof of Theorem 2}\mbox{}

First we establish the following lemma.
Let $r$ be the number of iterations executed by the repeat loop in Algorithm~\ref{algo:naivegreedy}. Let $S$ be the current seed set and $S^*$ be the optimal seed set. Without loss of generality, we may renumber nodes that was added to $S$ follow the chronicle order $S = \{u_1, u_2,\cdots, u_l\}$. Let $S_i = \bigcup_{j=1}^iu_j$ and let $j_i$ be the index of the iteration in which $u_i$ was considered.

\vspace{0.1in}
\begin{lem}
\label{lem1}
After each iteration $j_i, i = 1,\cdots, l+1$, the following holds:
\beq
\sigma(S_i) \geq \left[1-\prod_{k=1}^i{\left(1 - \frac{c(k)}{b}\right)}\right]\sigma(S^*).
\eeq
\end{lem}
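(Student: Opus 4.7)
The plan is to prove Lemma~1 by induction on $i$. The base case $i=1$ amounts to showing $\sigma(\{u_1\}) \ge (c(u_1)/b)\,\sigma(S^*)$. Since $u_1$ maximizes $\sigma(v)/c(v)$ over all $v\in V$, we have $\sigma(v) \le (c(v)/c(u_1))\,\sigma(u_1)$ for every $v\in S^*$. Summing over $v\in S^*$ and using the budget constraint $\sum_{v\in S^*}c(v) \le b$ gives $\sum_{v\in S^*}\sigma(v) \le (b/c(u_1))\,\sigma(u_1)$, and submodularity/monotonicity from $\emptyset$ yields $\sigma(S^*) \le \sum_{v\in S^*}\sigma(v)$. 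Chaining the two inequalities closes the base case.

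For the inductive step, assume the bound holds at index $i-1$ and consider iteration $j_i$ where $u_i$ is added. The key ingredients are: (a) submodularity plus monotonicity give
\[
\sigma(S^*) - \sigma(S_{i-1}) \le \sigma(S^*\cup S_{i-1}) - \sigma(S_{i-1}) \le \sum_{v\in S^*\setminus S_{i-1}} \bigl[\sigma(S_{i-1}\cup v) - \sigma(S_{i-1})\bigr];
\]
(b) the greedy choice at $j_i$ implies that for each candidate $v$ still available, $[\sigma(S_{i-1}\cup v)-\sigma(S_{i-1})]/c(v) \le [\sigma(S_i)-\sigma(S_{i-1})]/c(u_i)$; (c) weighting each summand by $c(v)/c(u_i)$ and invoking $\sum_{v\in S^*\setminus S_{i-1}} c(v) \le b$ produces
\[
\sigma(S^*) - \sigma(S_{i-1}) \le \frac{b}{c(u_i)}\bigl[\sigma(S_i)-\sigma(S_{i-1})\bigr].
\]

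Rearranging gives $\sigma(S_i) \ge (1-c(u_i)/b)\,\sigma(S_{i-1}) + (c(u_i)/b)\,\sigma(S^*)$. Substituting the inductive hypothesis for $\sigma(S_{i-1})$ and letting $\alpha=c(u_i)/b$, $\beta = \prod_{k=1}^{i-1}(1-c(u_k)/b)$, the right-hand side simplifies via $(1-\alpha)(1-\beta) + \alpha = 1 - \beta(1-\alpha)$, which is precisely $1 - \prod_{k=1}^{i}(1-c(u_k)/b)$. This closes the induction.

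The main obstacle is step (b): the bound on $v$'s ratio uses the greedy choice at iteration $j_i$, but line~7 of Algorithm~\ref{algo:naivegreedy} removes nodes from $V$ even when they were rejected by the budget test, so some $v\in S^*\setminus S_{i-1}$ may no longer be in $V$ at iteration $j_i$. I would handle this by extending the ratio comparison to the iteration $j'$ in which such a $v$ was last considered: at that point $\delta_{j'}(v) \ge \delta_{j'}(u_i)$, and submodularity ensures $\delta_{j'}(u_i) \ge \delta_{j_i}(u_i)$, which together with the analogous monotonicity of $\delta(v)$ in the set argument allows us to substitute the surrogate upper bound into the sum in step (a). This extra accounting is exactly the reason the lemma is stated up to $i=l+1$: the ``virtual'' iteration where the greedy first fails to add a node is what makes the inequality tight enough to later derive the $(1-1/\sqrt{e})$ bound in Theorem~\ref{thm2}.
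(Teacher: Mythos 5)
The paper does not actually prove Lemma~\ref{lem1}; it defers entirely to Khuller {\it et al.}~\cite{Khuller199939} and Krause--Guestrin~\cite{KrauseTechReport05}. Your induction is precisely the standard argument from those references, and its skeleton is sound: the base case, the chain (a)--(c), the rearrangement to $\sigma(S_i)\ge(1-c(u_i)/b)\sigma(S_{i-1})+(c(u_i)/b)\sigma(S^*)$, and the algebraic identity $(1-\alpha)(1-\beta)+\alpha=1-\beta(1-\alpha)$ are all correct (the substitution of the inductive hypothesis is legitimate because $1-c(u_i)/b\ge 0$ under the standing assumption that no single node costs more than $b$). You also correctly identify the one real obstacle: line~7 of Algorithm~\ref{algo:naivegreedy} removes budget-rejected nodes from $V$, so a node $v\in S^*\setminus S_{i-1}$ may be unavailable for the greedy comparison at iteration $j_i$.

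However, your patch for that obstacle does not work. Writing $\delta_T(v)=[\sigma(T\cup v)-\sigma(T)]/c(v)$, what step (b) needs is an \emph{upper} bound $\delta_{S_{i-1}}(v)\le\delta_{S_{i-1}}(u_i)$. Your chain $\delta_{j'}(v)\ge\delta_{j'}(u_i)\ge\delta_{j_i}(u_i)$ only \emph{lower}-bounds $\delta_{j'}(v)$ by $\delta_{j_i}(u_i)$, and submodularity gives $\delta_{S_{i-1}}(v)\le\delta_{j'}(v)$ --- these two facts place $\delta_{S_{i-1}}(v)$ and $\delta_{S_{i-1}}(u_i)$ both below $\delta_{j'}(v)$ without ordering them relative to each other, so the surrogate cannot be substituted into the sum in step (a). The correct resolution, which is the one in~\cite{Khuller199939} and~\cite{KrauseTechReport05}, is to run the lemma only up to the \emph{first} iteration at which a node is rejected: for those iterations no node has yet been discarded, so every $v\in S^*\setminus S_{i-1}$ is still in $V$ at iteration $j_i$ and the greedy comparison is immediate; the index $l+1$ refers to the virtual step that adds that first rejected node (whose inclusion pushes the cumulative cost past $b$), not to extra accounting for rejections interleaved throughout the run. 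As stated for the unmodified Algorithm~\ref{algo:naivegreedy}, with $l$ the total number of nodes it eventually adds, the lemma is not what the cited references establish, and your argument as written does not close this gap.
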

\vspace{0.1in}

\begin{proof}
The proof of Lemma~\ref{lem1} was first presented by Khuller {\it et al.} in~\cite{Khuller199939} for the budgeted maximum coverage problem, which is a special case of BIM where all the active edge probabilities are 1. Later, it was extended by Krause {\it et al.} (Lemma 3 in~\cite{KrauseTechReport05}) for general submodular functions.
\end{proof}

Now we're in position to prove Theorem 2:

\begin{proof}
(Adapted from~\cite{Khuller199939}) We prove Theorem~\ref{thm2} by case analyzing Algorithm~\ref{algo:improvedgreedy}.
\begin{itemize}
\item {\bf Case 1:} If there exist at lease a node $u \in V$ which has spread greater than $\frac{1}{2}\sigma(S^*)$, then $u$ or any other nodes which possess a greater spread, will be selected as the second candidate $S_2$. Algorithm~\ref{algo:improvedgreedy} will therefore guarantee at least $\frac{1}{2}\sigma(S^*)$.
\item {\bf Case 2:} If there is no such node.
\begin{itemize}
\item \underline{{\it Case 2.1:}} If $c(S) < \frac{1}{2}b$, then we have $c(u) > \frac{1}{2}b, \forall u \not\in S$ since there is no more node that can be added to $S$ without violating the budget constrain. W.l.o.g, we assume $S \neq S^*$. Therefore, $S^*\backslash S$ contains at most 1 node $v$, otherwise $c(S^*) > b$. By submodularity definition we have,
    $$
    \hspace{-0.5in}
    \begin{array}{lll}
    \sigma(S^* \cap S) + \sigma(v) & \geq & \sigma((S^* \cap S) \cup v) + \sigma((S^* \cap S) \cap v)\\
                                    & \geq & \sigma(S^*) + \sigma(\emptyset)\\
                                    & \geq & \sigma(S^*).
    \end{array}
    $$
    By assumption, we have $\sigma(v) < \frac{1}{2}\sigma(S^*)$, therefore $\sigma(S^* \cap S) \geq \frac{1}{2}\sigma(S^*)$. It follows that $\sigma(S) \geq \frac{1}{2}\sigma(S^*)$.
\item \underline{{\it Case 2.2:}} If $c(S) \geq \frac{1}{2}b$. We first observe that for $a_1,\cdots a_n\in \mathbb{R}$ and $\sum_{i = 1}^{n}{a_i} \ge \alpha A$, the function,
    $$
    \hspace{-0.5in}
    \prod_{i=1}^{n}{\left(1-\frac{a_i}{A}\right)}
    $$
    is maximized when $a_i = \frac{\alpha A}{n}$. By Lemma~\ref{lem1}, we have,
    $$
    \hspace{-0.5in}
    \begin{array}{lll}
    \sigma(S_i) & \geq & \displaystyle \left[1 - \prod_{k=1}^{i}{\left(1-\frac{c(k)}{b}\right)}\right]\sigma(S^*)\\
                & \geq & \displaystyle \left[1 - \left(1-\frac{1}{2i}\right)^i\right]\sigma(S^*)\\
                & \geq & \displaystyle \left(1 - \frac{1}{\sqrt{e}}\right)\sigma(S^*).
    \end{array}
    $$
    Thus, in the worst case, Algorithm~\ref{algo:improvedgreedy} provides a $(1-1/\sqrt{e})$-approximation.
\end{itemize}
\end{itemize}
\end{proof}

%
%

\end{document}